\newcommand{\bb}[1]{\begin{equation}\label{#1}}
\newcommand{\ee}{\end{equation}}
\newcommand{\vc}[1]{{\bf #1 }}
\newcommand{\Sum}[1]{\sum\limits_{#1}}
\newcommand{\Int}[1]{\int\limits_{#1}}
\newcommand{\ol}{\overline}
\newcommand{\tr}{\operatorname{tr}}
\newcommand{\bbR}{\mathbb R} 
\newcommand{\bbZ}{\mathbb Z}
\newcommand{\g}{{\bf g}}
\newcommand{\lpartial}{\overset{\rightarrow}{\partial}}
\newcommand{\lder}[1]{\overset{\rightarrow}{\vc #1}}
\newcommand{\supp}{\operatorname{supp}}
\newcommand{\E}{\mathcal E} 
\newcommand{\M}{\mathcal M}
\def\bothidenty{\rlap{\hbox to.97\wd0{\hss\vrule height.06\ht0 width.82\wd0}}
 \copy0\rlap{\kern-.36\wd0\vrule height1.05\ht0 width.05\ht0}\kern.14\wd0}
\newtheorem{theorem}{Theorem}[section]
\newtheorem{definition}[theorem]{Definition}
\newtheorem{lemma}[theorem]{Lemma}
\newtheorem{proposition}[theorem]{Proposition}
\theoremstyle{definition}
\def\E{{\mathcal{E}}}
\def\M{{\mathcal{M}}}
\def\L{{\mathcal{L}}}
\def\S{{\mathcal{S}}}
\def\U{{\mathcal{U}}}
\def\Z{{\mathbb Z}}
\def\N{{\mathbb N}}
\newcommand{\bbM}{\mathbb M}
\def\id{{\mbox{id}}}
\begin{document}

%\title[Batalin--Vilkovisky Integrals in Finite Dimensions]
%{Batalin--Vilkovisky Integrals in Finite Dimensions}
\title{Batalin--Vilkovisky Integrals in Finite Dimensions}

\author{C. Albert\footnote{present address: Section de math\'ematiques, University of Geneva, CH--1211 Geneva 4},\,\, 
B. Bleile\footnote{present address: School of Science and Technology, University of New England, Armidale 2351, Australia }\,\, and J. Fr{\"o}hlich
\\
Theoretical Physics, ETH-H\"onggerberg, CH-8093 Z\"urich}

%\author{Carlo Albert\footnote{Section de math{e}matiques, University of Geneva, CH-1211 Gen{e}ve 4}, \foonote{Institut f{\"u}r Theoretische Physik, ETH Z{\"u}rich, CH--8093 Z{\"u}rich}}
 
%\address
%{Universit{'e} de Geneve \\
%ETH Z{\"u}rich\\
%CH--8093 Z{\"u}rich} 

%\address
%{Institut f{\"u}r Theoretische Physik \\
%ETH Z{\"u}rich\\
%CH--8093 Z{\"u}rich} 

%\address
%{Institut f{\"u}r Theoretische Physik \\
%ETH Z{\"u}rich\\
%CH--8093 Z{\"u}rich} 
\maketitle

\begin{abstract}
The Batalin-Vilkovisky method ($BV$) is the most powerful method to analyze functional integrals with (infinite-dimensional) gauge symmetries presently known. It has been invented to fix gauges associated with symmetries that do not close off-shell.
Homological Perturbation Theory is introduced and used to develop the integration theory behind $BV$ and to describe the $BV$ quantization of a Lagrangian system with symmetries.
Localization (illustrated in terms of Duistermaat-Heckman localization) as well as anomalous symmetries are discussed in the framework of $BV$.
\end{abstract}

%\tableofcontents

%\input{intro080401.tex}%%%%%%%%%%%%%%%%%%%%%%%%%%%%%%%%%%%%%%%%%%%%%%%%%%%

\section{Introduction}\label{intro}

In the Lagrangian approach to (quantum) physics, we are given a {\em space of fields}, $M$, and an {\em action}, $S_0$, thereon.
Typically, the space of fields is an infinite-dimensional space of sections of some field bundle over space-time.
For the following to hold not only at formal level, we assume $M$ to be a finite-dimensional smooth manifold without boundary.
Since, in this paper, we focus our attention on various {\em cohomology} theories connected with gauge-fixing, we should mention that aspects of {\em local} cohomology are absent if we assume $M$ to be finite-dimensional.
We assume the action $S_0$ to be smooth and complex-valued, with a non-negative imaginary part.

The equations $S_{0,i}=0$, where $S_{0,i}$ denotes the derivative of $S_0$ w.r.t. the $i$'th coordinate function on $M$, are called {\em equations of motion}, and their set of solutions, $\Sigma\subseteq M$, is called the {\em shell}\footnote{This name comes from field theory jargon, where one speaks of the "mass shell" and of "on-shell" conditions etc.}.
A major difficulty in the calculation of path-integrals in field theory originates in the presence of {\em local gauge symmetries}.
In general, infinitesimal local gauge symmetries do not form a Lie algebra; however, {\em on-shell}, they do. In our finite-dimensional setting, a symmetry is given by a linear subspace, $P\subseteq \Gamma(TM)$, of sections of the tangent bundle with the following properties:
\begin{enumerate}
\item For all $\vc X\in P$,
\bb{IntroP1}
\vc X(S_0)=0\,.
\ee
\item
There are tensors $T\in \Lambda^2P^*\otimes P$ and $E\in \Lambda^2P^*\otimes\Gamma(\Lambda^2TM)$, with
\bb{IntroP2}
[\vc X,\vc Y]=T(\vc X,\vc Y)+dS_0\neg E(\vc X,\vc Y)\,,\quad\text{for all}\quad \vc X\,,\,\vc Y\in P\,.
\ee
\end{enumerate}
Furthermore, we assume that $P$ is a finitely generated $C^\infty(M)$-module.
Since, on-shell, $P$ is closed w.r.t. the Lie bracket, it determines a foliation of the shell.
In infinite dimensions, the leaf space of the foliation, if a manifold, carries a natural symplectic structure and is the {\em classical phase space} of the theory (see \cite{Zuckerman}).

In field theories without anomalies, $M$ is equipped with a formal measure, $\Omega_0$, that respects the symmetry, i.e., $\operatorname{div}_{\Omega_0} \vc X=0$, for all $\vc X\in P$. {\em Quantizing} these theories means calculating {\em vacuum expectation values} of gauge-invariant functions, 
$$
f\in C^\infty(M)^P:=\{f\in C^\infty(M)\,|\,\vc X(f)=0\,,\vc X\in P\}\,,
$$ 
by means of the following formal path-integral:
\bb{IntroPathintegral}
<f>=\frac{\Int{M}fe^{iS_0/\hbar}\Omega_0}{\Int{M}e^{iS_0/\hbar}\Omega_0}\,.
\ee
Both numerator and denominator in (\ref{IntroPathintegral}) tend to diverge, due to the presence of the symmetry generated by $P$, and thus expression (\ref{IntroPathintegral}) needs to be replaced by a {\em gauge-fixed} version.
We must do this in such a way that the result is manifestly independent of the gauge chosen.
If the symmetry {\em closes} off-shell, i.e., if $E=0$ in (\ref{IntroP2}), this problem has been solved with the help of the so-called $BRST$-method (going back to Becchi, Rouet, Stora \cite{BRS} and Tyutin \cite{T}).
Here we provide a brief description of this method and illustrate it with an example at the end of this introduction.
The $BRST$-method extends $M$ to a graded manifold $\M$, by adding auxiliary even and odd fields {\em (Lagrange multipliers and ghosts)}, in such a way that all symmetry vector fields $\vc X\in P$ are encoded in {\em one odd vector field} $\mathcal X$ on $\M$, which, as an operator on functions on $\M$ ($BRST$-operator), squares to zero: $\mathcal X^2=0$. For this latter property to hold, ghosts must be introduced.
Then, in (\ref{IntroPathintegral}), $S_0$ is replaced by $S_0+\mathcal X(\Psi)$, for a suitably chosen {\em odd} function, $\Psi$, on $\M$, and integration is extended to $\M$. The so-called {\em gauge-fixing fermion} $\Psi$ is a function of fields, Lagrange multipliers and ghosts, and we call it "suitable" if integration w.r.t. the Lagrange multipliers and ghosts (the latter being a Berezin integral) yields a {\em finite} measure on the space of orbits of the symmetry $P$ acting on $M$.
Due to the fact that $S_0$ and $\Omega_0$ are invariant under the symmetry encoded in $\mathcal X$, (\ref{IntroPathintegral}) then turns into a functional on cohomology-classes of $\mathcal X$, which implies that a variation of the gauge-fixing fermion $\Psi$ does not alter the expectation value.

The $BRST$-method can also be formulated using {\em anti-fields}, which were originally introduced by Zinn-Justin \cite{ZJ}: Each field, ghost and Lagrange multiplier is paired with a field of opposite statistics (its anti-field). Mathematically, this means that $\M$ is extended to its odd cotangent bundle
\bb{IntroBVspace}
\E=\Pi T^*\M\,.
\ee
The action $S_0$ is now replaced by $S:=S_0+S_1$ with 
\bb{IntroS1}
S_1=\sum_i z^\dag_i\mathcal X(z^i)\,,
\ee
where the $z^i$ are coordinate-functions on $\M$ and the $z^\dag_i$ are corresponding anti-fields. (In infinite dimensions, the sum has to be replaced by an integral.)
The gauge-fixing is then encoded in the choice of a submanifold in $\E$ given by the equations
\bb{IntroL}
z^\dag_i=\frac{\partial\Psi(z)}{\partial z^i}\,,
\ee
where $\Psi=\Psi(z)$ is the gauge-fixing fermion introduced above.
As an odd cotangent bundle, $\E$ is equipped with the natural odd symplectic structure
$$
\omega=dz^i\wedge dz^\dag_i\,,
$$
and eqs (\ref{IntroL}) define a {\em Lagrangian submanifold}\footnote{A submanifold of half the dimension of $\E$ where $\omega$ vanishes.}, $\L$, in $\E$.
We obtain
$$
S_{\arrowvert\L}=S_0+\mathcal X(\Psi)
$$
and the vacuum expectation value (\ref{IntroPathintegral}) is replaced by
\bb{IntroPathintegralBV}
<f>=\frac{\Int{\L}fe^{iS/\hbar}\Omega_\L}{\Int{\L}e^{iS/\hbar}\Omega_\L}\,,
\ee
where $\L$ is suitably chosen; "suitable" again in the sense that it yields a finite measure on the space of orbits of the symmetry.
Note that the measure $\Omega_\L$ in (\ref{IntroPathintegralBV}) must be well-defined for an {\em arbitrary} choice of a Lagrangian submanifold $\L$.
Khudaverdian \cite{Khudaverdian} calls such measures {\em semidensities}, and Severa \cite{Severa} discovered their cohomological nature. We will give a definition of semidensities in Section 2.
For the time being, we just mention that, in order to define a semidensity, we need a measure, $\Omega$, on the even submanifold, or {\em body}, of $\E$, and $\Omega$ and hence the semidensity depend on the path-integral measure $\Omega_0$.
Note that the $BRST$-operator (or rather: an extension of the $BRST$-operator to functions on $\E$) is now given by $\{S_1,.\}$, where $\{.,.\}$ denotes the odd Poisson bracket corresponding to the odd symplectic structure on $\E$, and $\mathcal X^2=0$ is equivalent to $\{S_1,S_1\}=0$.
Since $S_0$ does not depend on any anti-fields, we also have that $\{S_0,S_0\}=0$. Furthermore, (\ref{IntroP1}) implies that $\{S_0,S_1\}=0$ and thus 
\bb{IntroCME}
\{S,S\}=0\,,
\ee
which is called {\em Classical Master Equation (CME)}.

This reformulation of the $BRST$-formalism allowed Batalin and Vilkovisky \cite{BV} to treat path-integrals with {\em open} symmetries (i.e., symmetries with a non vanishing $E$ in (\ref{IntroP2})), which is why the use of anti-fields in gauge-fixing path-integrals is nowadays called the {\em $BV$-method}.
In Section 3 we shall see, that the cohomology of the co-boundary operator $\{S_0,.\}$ is the restriction of functions (of fields and anti-fields) to the shell.
This is the reason why open symmetries can potentially be gauge-fixed with the $BV$-method. 
While it is not possible to encode an open symmetry in a co-boundary operator $\mathcal X$ on functions on $\M$, it is often possible to do so on $\E$.
That is, it is often possible to find an extension of the action $S_0$ by anti-field terms to an action $S=S_0+S_1$ that satisfies the $CME$ (\ref{IntroCME}) in such a way that $\delta=\{S,.\}$ encodes the symmetry, i.e.,
for all gauge-invariant functions $f\in C^\infty(M)^P$,
\bb{Introcond1}
\delta (\pi_0^*(f))=0\,,
\ee
where $\pi_0:\E\rightarrow M$ is a projection.
If the symmetry closes off-shell, the term $S_1$ is given by the $BRST$-operator according to (\ref{IntroS1}).
If it is open, $S_1$ contains terms of higher order in the anti-fields and $\{S_1,.\}$ does {\em not} square to zero.

In Section 3 we will see that, under a certain regularity condition on $S_0$, gauge-invariant functions whose (closed) support lies outside the shell are $\delta$-exact:
\bb{Introcond2}
\delta(\pi_0^*(f))=0\,,\quad\ol{\supp f}\cap \Sigma=\emptyset\Rightarrow \pi_0^*(f)\quad\text{is $\delta$-exact}\,.
\ee
Unless our system is {\em semi-classically exact}, (\ref{IntroPathintegralBV}) does {\em not} define a functional on $\delta$-cohomology classes.
Along with the measure $\Omega$ on the body of $\E$ comes a {\em third co-boundary operator}. In $BV$-language, the invariance of $\Omega$ under the symmetry reads
$$
\Delta_\Omega S_1=0\,,
$$
where $\Delta_\Omega$ is (up to signs) the second-order differential co-boundary operator $\partial^2/\partial z^i\partial z^\dag_i$ depending on $\Omega$.
In Section 3 we shall see that (\ref{IntroPathintegralBV}) defines a functional on cohomology classes of the operator
\bb{IntrodeltaBV}
\delta_{BV}:=\delta+i\hbar\Delta_\Omega\,,
\ee
which squares to zero if the so-called {\em Quantum Master Equation (QME)},
\bb{IntroQME}
\frac{1}{2}\{S,S\}-i\hbar\Delta_{\Omega}S=0\,,
\ee
is satisfied.
Furthermore, we shall see that this implies that the expectation value (\ref{IntroPathintegralBV}) is invariant under Hamiltonian variations of the Lagrangian submanifold $\L$ if $f$ is gauge-invariant.
Since Hamiltonian variations of $\L$ are associated with variations of the gauge-fixing classical symmetries survive quantization if the $QME$ has a solution.
If the measure $\Omega$ fails to be invariant under the symmetry-flow generated by $S$, that is, if $\Delta_\Omega S\neq 0$, we must add higher order $\hbar$-terms ({\em "counter terms"}) to $S$ to obtain a solution of the $QME$.
In Section 3, we will find cohomological obstructions to solving both the $CME$ and the $QME$.
Obstructions to solving the $QME$ are called {\em anomalies} since they prevent classical symmetries from being quantized.

From (\ref{Introcond2}) and (\ref{IntrodeltaBV}) we derive the statement
\bb{Introcond2q}
\delta(\pi_0^*(f))=0\,,\quad\ol{\supp f}\cap \Sigma=\emptyset\Rightarrow \pi_0^*(f)=\delta_{BV}\text{-exact}+\mathcal O(\hbar)\,,
\ee
which gives rise to the {\em perturbative expansion} (in powers of $\hbar$) of a path-integral around solutions of the equations of motion.
In general, such expansions do not converge. If they terminate at finite order, we say that the path-integral {\em localizes} on the shell. In Section 3, we will rederive Duistermaat-Heckman localization in the $BV$-formalism.

In order to illustrate some of the abstract concepts described above, we conclude this Introduction with a concrete example, pure (non-abelian) Yang-Mills theory.
Consider a trivial principal $G$-bundle, $Q$, over a 4-dimensional space-time, say Minkowski space $\bbM^4$, and, as a space of fields, $\mathcal A$, the connections thereon. Since $Q$ is assumed to be trivial, these are globally defined one-forms on $\bbM^4$ with values in the Lie algebra, $\g$, of $G$:
$$
\mathcal A:=\Omega^1(\bbM^4,\g)\,.
$$
On $\mathcal A$ we define the pure Yang-Mills action
$$
S_0(A):=\Int{\bbM^4}\tr (F(A)\wedge *F(A))\,,\quad F(A):=dA+\frac{1}{2}[A\wedge A]\,,
$$
where $*$ denotes the Hodge-dual w.r.t. the Minkowski metric.
The infinitesimal gauge symmetries, 
$$
A\mapsto A+\vc X_\epsilon(A)\,,
$$
are given by vector fields on $\mathcal A$ of the form
\bb{IntroExSymm}
\vc X_\epsilon(A)=D_A\epsilon=d\epsilon+[A,\epsilon]\,,\quad \epsilon\in\Omega^0(\bbM^4,\g)\,.
\ee
Corresponding to these symmetries we introduce {\em ghost fields}
$$
\beta\in\Omega^0(\bbM^4,\g[1])\,,
$$
where the integer in brackets denotes the {\em ghost degree}. (Fields of even/odd ghost degree are called {\em bosonic/fermionic}, respectively.)
Furthermore, we introduce {\em antighosts} (not to be confused with the anti-fields of the ghosts!) and {\em Lagrange multipliers}
$$
\bar\beta\in \Omega^4(\bbM^4,\g^*[-1])\,,\quad \lambda \in \Omega^4(\bbM^4,\g^*)\,.
$$
The fields, ghosts, anti-ghosts and Lagrange multipliers determine an infinite-dimensional graded manifold, $\M$. Functions on $\M$ are local functionals of sections of a trivial graded vector bundle over $\bbM^4$.
The symmetry (\ref{IntroExSymm}) gives rise to the $BRST$-operator
\begin{align*}
\mathcal X(A)&=D_A\beta\,,\\
\mathcal X(\beta)&=-\frac{1}{2}[\beta,\beta]\,,
\end{align*}
which is a co-boundary operator as the Lie bracket satisfies the Leibniz and Jacobi identities.
On the anti-ghosts and Lagrange multipliers we define
\begin{align*}
\mathcal X(\bar\beta)&=\lambda\,,\\
\mathcal X(\lambda)&=0\,.
\end{align*}
Finally, every field in $\M$ is paired with an {\em anti-field}, i.e., we introduce
\begin{align*}
A^\dag&\in \Omega^3(\bbM^4,\g^*[-1])\,,\\
\beta^\dag&\in \Omega^4(\bbM^4,\g^*[-2])\,,\\
\bar\beta^\dag&\in \Omega^0(\bbM^4,\g)\,,\\
\lambda^\dag&\in \Omega^0(\bbM^4,\g[-1])\,.
\end{align*}
The fields and anti-fields together form the odd symplectic space $\E=\Pi T^*\M$; the gradings are chosen such that its natural odd symplectic structure has ghost degree $-1$.

Pairing each anti-field with the $BRST$-transformation of the corresponding field according to (\ref{IntroS1}), we extend the action $S_0$ to
$$
S=S_0+\Int{\bbM^4}\langle A^\dag\wedge D_A\beta\rangle-\langle \beta^\dag,\frac{1}{2}[\beta,\beta]\rangle+\langle\bar\beta^\dag,\lambda\rangle\,,
$$
where $\langle.,.\rangle$ denotes the natural pairing of elements in $\g$ with elements in $\g^*$.
The $CME$ $\{S,S\}=0$ follows from the facts that $\mathcal X$ squares to zero and $S_0$ is gauge-invariant.
Furthermore, gauge-invariant functionals on $\mathcal A$ have vanishing odd Poisson bracket with $S$, so that condition (\ref{Introcond1}) is satisfied.

The $BV$ Laplacian $\Delta_\Omega$ appearing in the $QME$ is given by the operator
$$
\Delta_\Omega=\Int{\bbM^4}\left\lbrace\frac{\delta^2}{\delta A\delta A^\dag}+\frac{\delta^2}{\delta \beta\delta \beta^\dag}+\frac{\delta^2}{\delta \bar\beta\delta \bar\beta^\dag}+\frac{\delta^2}{\delta \lambda\delta \lambda^\dag}\right\rbrace\,,
$$
and we find that $\Delta_\Omega S=0$. Hence, $S$ is a solution of the $QME$.
Notice, however, that $\Delta_\Omega$ is singular.
We refer the reader to Costello \cite{Costello} for a {\em renormalized} version of the $QME$ and its solutions for the example considered here.

Next, we impose a general gauge-fixing condition
\bb{Introgf}
G(A)=0\,,
\ee
where $G:\,\Omega^1(\bbM^4,\g)\rightarrow\Omega^0(\bbM^4,\g)$ is a local map, i.e., $(G(A))(x)$ depends only on a finite number of derivatives of $A$ at $x$.
The Lorentz-gauge $G(A)=\partial^\mu A_\mu$ provides an example.
Condition (\ref{Introgf}) is implemented via the gauge-fixing fermion
$$
\Psi:=\Int{\bbM^4}\langle\bar\beta,G(A)\rangle+i\Int{\bbM^4}\alpha(\bar\beta,*\lambda)\,,\quad\alpha\,:\,\text{metric on $\g^*$}\,,
$$
where the second term is added in order to achieve a Gaussian average around the gauge-fixing (\ref{Introgf}).
The associated Lagrangian submanifold, $\L_\Psi$, is determined by the equations
$$
A^\dag=\frac{\delta\Psi}{\delta A}=\delta_A G^*(\bar\beta)\,,\quad \bar\beta^\dag=\frac{\delta\Psi}{\delta \bar\beta}=G(A)+i\alpha(*\lambda)\,,\quad \beta^\dag=\lambda^\dag=0\,,
$$
where $\delta_A G^*\,:\,\Omega^4(\bbM^4,\g^*)\rightarrow \Omega^3(\bbM^4,\g^*)$ is the dual of the derivative, $\delta_AG\,:\,\Omega^1(\bbM^4,\g)\rightarrow \Omega^0(\bbM^4,\g)$, of $G$ at $A$.
The restriction of the action $S$ to $\L_\Psi$ reads
\bb{IntroYMgf}
S_{\arrowvert \L_\Psi}=S_0+\Int{\bbM^4}\langle\bar\beta,\delta_A G(D_A\beta)\rangle+\langle \lambda,G(A)\rangle+i\alpha(\lambda,*\lambda)\,.
\ee
The first term under the integral yields the usual Faddeev-Popov determinant after Berezin integration, while the last two terms implement the gauge-fixing.

Note that if $\alpha=0$ we actually do not need to introduce Lagrange multipliers and antighosts in order to implement the gauge-fixing (\ref{Introgf}). Instead, we can use the Lagrangian submanifold, $\L$, defined by the equations
$$
G(A)=0\,,\quad \beta^\dag=0\,,\quad A^\dag=\delta_A G^*(\bar\beta)\quad\text{for}\quad \bar\beta\in \Omega^4(\bbM^4,\g^*[-1])\,.
$$
%In order for $\L$ to be Lagrangian, we need to integrate only over anti-fields of the form $A^\dag=\delta_A G^*(\bar\beta)$, where $\bar\beta\in \Omega^4(\bbM^4,\g^*[-1])$.
%Hence, integrating $\exp(iS)$ over $\L$, without the last term containing the Lagrange multipliers, is equivalent to integrating $\exp(iS)$ over $\L_\Psi$.

The organization of this paper is as follows:
In Section 2 we first introduce general notions of graded differential geometry.
Then we review homological perturbation theory and use it to define semidensities and develop their integration theory.
In Section 3 we construct an action satisfying the $CME$ (\ref{IntroCME}) and property (\ref{Introcond1}), for the finite dimensional Lagrangian systems introduced above.
Homological Perturbation Theory will be the essential tool for the construction of a solution of the $CME$ for an open symmetry.
We discuss the role of the operator $\Delta_\Omega$ in connection with localization and illustrate our formalism with Duistermaat-Heckman localization. 
We conclude with a comment on anomalies, i.e., on obstructions to solving the $QME$.  

\subsection*{Acknowledgements}
The first author is indebted to Pavol Severa and Giovanni Felder for enlightening discussions about the $BV$ formalism.
Furthermore, he gratefully acknowledges financial support by the Swiss National Foundation, the ETH Zurich and the University of Geneva.
The second author also gratefully acknowledges support by ETH Zurich.
\section{Mathematical Formalism}
\subsection{Elements of Graded Algebra}
The notion of a \emph{grading} for an algebraic object is defined over any monoid $G$. For us, $G = \N, \Z$ or $\Z_2= \{ 0, 1 \}$. Instead of using the term $\bbZ_2$-graded we also use the prefix {\em "super"}.

A {\emph{$G$--graded ring}} is a ring, $R = \bigoplus_{i \in G} R_i$, such that $R_iR_j \subseteq R_{ij}$, where $ij$ is the product of $i, j \in G$. 
Given a $G$-graded ring $R$, a {\emph{$G$--graded $R$--module}} is an $R$--module $M = \bigoplus_{i \in G} M_i$ such that $R_iM_j \subseteq M_{ij}$. A {\emph{$G$--graded $R$--algebra}} is an $R$-algebra $A = \bigoplus_{i \in G} A_i$ such that $A_iA_j \subseteq A_{ij}$ and $R_iA_j \subseteq A_{ij}$.
An element $x \in M_i$ (or $A_i$ or $R_i$) is called \emph{homogeneous of degree $i$}, and we write $|x| = i$. 

Let $M$, $N$ be graded $R$-modules. Then we define
\bb{DGHom}
{\rm{Hom}}(M,N)^j = \{ \varphi: M \rightarrow N \ | \ \varphi \ {\text{is  $R$-linear  and}} \ \varphi(M_i) \subseteq N_{i+j} \ \},
\ee
and 
$$
{\rm{Hom}}(M,N) = \bigoplus_{j \in G} {\rm{Hom}}(M,N)^j\,.
$$

%For $G = \Z$ or $\Z_2$, there is a functor, $\Pi$, from the category of $G$--graded modules to itself, called the \emph{shift} or \emph{suspension}, defined by
%
%\begin{equation}\label{shift}
%(\Pi M)^i = M^{i-1}.
%\end{equation}
%
%In the literature $\Pi M$ is also denoted by $M[-1]$. 

%For example, the \emph{Grassman algebra} $A = k[\xi_1, \ldots, \xi_n]$ over the field $k$ of characteristic $r \neq 2$, generated by $\xi_1, \ldots, \xi_n$ subject to the relations $\xi_i \xi_j = - \xi_j \xi_i$, for $1 \leq i, j \leq n$, is a super--commutative algebra with $|\xi_i| = 1$. 

The \emph{tensor algebra} of $M$ over $R$, 
\begin{equation}\label{tensalg}
T_R(M) = \bigoplus_{n=0}^{\infty} M^{\otimes n},
\end{equation}
where $M^0 = R$ and the tensor product is taken w.r.t. $R$, is a bi--graded associative algebra with multiplication $M^{\otimes n} \otimes M^{\otimes m} \rightarrow M^{\otimes (n+m)}$ given by $(m_1, m_2) \mapsto m_1 \otimes m_2$.
Let $J_S$ be the ideal of $T_R(M)$ generated by {\em graded commutators},
$$
[[m_1, m_2]] = m_1 \otimes m_2 - (-1)^{|m_1||m_2|}m_2 \otimes m_1\,,
$$
with $m_1, m_2$ homogeneous in $M$. Then
\bb{DGsymmalg}
S_R(M) = \bigoplus_{n=0}^{\infty}S^n_R(M) = T_R(M) / J_S
\ee
is the {\emph{graded commutative algebra}} generated by $M$. 
Similarly, the {\em graded anti-commutative algebra} generated by $M$ is given by 
\bb{DGasymmalg}
{\bigwedge}_R(M) = T_R(M) / J_{\bigwedge}\,,
\ee 
where $J_{\bigwedge}$ is the ideal of $T_R(M)$ generated by $m_1 \otimes m_2 + (-1)^{|m_1||m_2|}m_2 \otimes m_1$, with $m_1, m_2$ homogeneous in $M$.

\subsection{Graded Manifolds}
Consider a bundle, $E$, of graded vector spaces over a smooth manifold $M$.
By this we mean that the local sections, $\Gamma(U,E)$, for any local neighbourhood $U\subseteq M$, are $C^\infty(M)$-modules freely generated by homogeneous sections.
To each open subset, $U\subseteq M$, we assign the graded commutative $C^\infty(U)$-algebra
\bb{gradedmfd}
C^\infty(\U):=S_{C^\infty(U)}(\Gamma(U,E^*))=\Gamma(U,S_\bbR(E^*))\,
\ee
as defined in (\ref{DGsymmalg}). In (\ref{gradedmfd}), $E^*$ denotes the bundle dual to $E$, and the degree of the dual of a homogeneous section is the negative of the degree of the original section.
Just as the assignment $U\mapsto C^\infty(U)$ may be viewed as defining a smooth structure on a topological manifold, we shall use the assignment $U\mapsto C^\infty(\U)$ to equip $M$ with the structure of a graded manifold. 
We denote the latter by $\M$ and write $C^\infty(\M)$ if $U=M$ in (\ref{gradedmfd}).
Elements of $C^\infty(\M)$ are called {\em graded functions}, their degree is called {\em ghost degree}.

This subsection is devoted to an abstract definition of a graded manifold inspired by the properties of the assignment (\ref{gradedmfd}). We will see that the isomorphism classes of graded manifolds are in bijection with the isomorphism classes of graded vector bundles. However, the category of graded vector bundles has fewer morphisms than the category of graded manifolds, since morphisms of graded manifolds do not need to be linear in the generators of the fibres of the associated graded vector bundle.

That the assignment (\ref{gradedmfd}) behaves nicely under restrictions to smaller open sets is encoded in the definition of a {\em sheaf}.
In general, a sheaf assigns to every open subset, $U$, of a topological space $M$ some algebraic structure $A(U)$ (in (\ref{gradedmfd}), $A(U)$ is a graded commutative $C^\infty(U)$-algebra), such that there is a restriction map
$$
r^U_V:A(U)\longrightarrow A(V)\,,\quad V\subseteq U\subseteq M\,,
$$
satisfying the following natural axioms:
For open subsets $W\subseteq V\subseteq U\subseteq M$:
\begin{itemize}
\item[(S1)] $r^U_W = r^V_W \circ r^U_V$; 
\item[(S2)] $r^U_U = \id_{A(U)}$.
\end{itemize}
Furthermore, given any collection $U = \bigcup_i U_i$ of open subsets in $M$,
\begin{itemize}
\item[(S3)] for $s, t \in A(U)$, the conditions $r^U_{U_i} (s) = r^U_{U_i} (t)$, for all $i$, imply $s = t$; 
\item[(S4)] for $s_i \in A(U_i)$, the conditions $r^{U_i}_{U_i \cap U_j} (s_i) = r^{U_j}_{U_i \cap U_j} (s_j)$, for all $i, j$, imply that there is an $s \in A(U)$ such that $r^U_{U_i} (s) = s_i$, for all $i$.
\end{itemize}
The sheaf (\ref{gradedmfd}) has an additional property, which must be satisfied by any graded manifold:
it is {\em locally trivial}, which means that for any local neighbourhood (coordinate patch) $U\subseteq M$ with local coordinates $\{x^1,\dots,x^{(m-m')}\}$ there are graded functions
\begin{equation}\label{localcoord}
\{y^1\ldots, y^{m'}, \beta^1, \ldots, \beta^n\}, 
\end{equation}
such that $C^\infty(\U)$ is generated - as a $C^\infty(U)$-algebra - by the even functions $\{y^1, \ldots, y^{m'}\}$ and the odd functions $\{\beta^1, \ldots, \beta^n\}$ and such that the requirements of graded commutativity yield the only relations among them.
We say that $C^\infty(\U)$ is freely generated by $m'$ even and $n$ odd generators.
We then say that the graded manifold $\M$ over a manifold $M$ of dimension $(m-m')$ has dimension $(m,n)$ and call $\{x^1,\dots,x^{(m-m')},y^1\ldots, y^{m'}, \beta^1, \ldots, \beta^n\}$ a set of {\em local coordinates} on $\U$. 
Hence, in local coordinates, any $f \in C^{\infty}(\U)$ can be written as
\begin{equation*}
f =  \sum_{i_1,\dots,i_{m'},\alpha_1, \ldots ,\alpha_n} f_{i_1,\dots,i_{m'},\alpha_1 \ldots \alpha_n}(x) (y^1)^{i_1} \ldots (y^{m'})^{i_{m'}} (\beta^1)^{\alpha_1} \ldots (\beta^n)^{\alpha_n}\,,
\end{equation*}
where the $i$'s are in $\mathbb N_0$, the $\alpha$'s in $\mathbb Z_2$ and $f_{i_1,\dots,i_{m'},\alpha_1, \ldots, \alpha_n}\in C^\infty(U)$. 
\begin{definition}
A {\em graded manifold} of dimension $(m,n)$ over a smooth, $(m-m')$-dimensional manifold $M$ is a sheaf of graded commutative $C^\infty(U)$-algebras
$$
M\supseteq U\mapsto C^\infty(\U)\,
$$
that are locally freely generated by $m'$ even and $n$ odd generators.
\end{definition}
A {\em morphism} $(\phi, \phi^*): (M, C^\infty(\M)) \rightarrow (N, C^\infty(\mathcal N))$ of graded algebras consists of a smooth map $\phi: M \rightarrow N$ and, for each open set $U\subseteq N$ and $V:=\phi^{-1}(U)$, a morphism of graded commutative algebras $\phi^*:C^\infty(\U)\rightarrow C^\infty(\mathcal V)$, i.e.,
$$
\phi^*(f):=f\circ \phi\quad\text{and}\quad\phi^*(fab)=\phi^*(f)\phi^*(a)\phi^*(b)\,,\quad\text{with}\quad |\phi^*(a)|=|a|\,,\,|\phi^*(b)|=|b|\,,
$$
for all $f\in C^\infty(U)$ and all homogeneous $a,b\in C^\infty(\U)$.

If the grading under consideration is a $ \bbZ_2$-grading, graded manifolds are called {\em super-manifolds}.
Batchelor \cite{Ba79} showed that, up to isomorphism, every super-manifold arises from a unique graded vector bundle as in (\ref{gradedmfd}).
Her arguments extend to general gradings, and hence, up to isomorphism, every graded manifold arises from a unique graded vector bundle, which we call the {\em graded vector bundle associated with the graded manifold}.
However, since the linear structure of the latter is not remembered by the graded manifold, not every morphism of a graded manifold arises from a morphism of the associated graded vector bundle.

Note that $C^\infty(\M)$ has a distinguished ideal, $I_0$, generated by its odd functions (it is clear that this notion is independent of the choice of coordinates). The ideal $I_0$ defines a distinguished even submanifold $\iota:\, M_0\hookrightarrow\M$, called the {\em body} and defined by the natural projection
$$
\iota^*:C^\infty(\M)\longrightarrow C^\infty(M_0)=C^\infty(\M)/I_0\,.
$$
While there is no canonical projection $\pi:\M\rightarrow M_0$, the existence of a graded vector bundle associated with $\mathcal M$ guarantees the existence of a projection: The body $M_0$ is the graded manifold arising from the even subbundle of the vector bundle associated with $\mathcal M$, and the projection of this bundle onto its even subbundle yields a projection  $\pi: \mathcal M \rightarrow M_0$.
\subsection{Graded Differential Geometry}
In this subsection, we introduce graded versions of (multi-) vector fields and differential forms, together with the operations of exterior differentiation, interior multiplication and Lie derivation and establish the commutation relations among these operations.

We define a homogeneous {\em (left-) vector field} of ghost degree $q$, denoted by $\lder{X}\in \mathcal X^{1,q}(\M)$, as a linear map of degree $q$, $\lder{X}\in {\rm{Hom}}(C^\infty(\M),C^\infty(\M))^q$ (see eq. (\ref{DGHom})), satisfying the {\em graded Leibniz rule}
\bb{DGgLeibnizvf}
\overset{\rightarrow}{\vc X}(f\cdot g)=\overset{\rightarrow}{\vc X}(f)\cdot g+(-1)^{(|\lder{X}|+1)|f|}f\cdot\overset{\rightarrow}{\vc X}(g)\,,
\ee
where we define the {\em total degree} of the vector field $\lder{X}$ to be
$$
|\lder{X}|:=q-1\,.
$$
The graded commutator,
\bb{DGgLA}
[[\overset{\rightarrow}{\vc X},\overset{\rightarrow}{\vc Y}]] = \overset{\rightarrow}{\vc X}\overset{\rightarrow}{\vc Y}-(-1)^{(|\lder{X}|+1)(|\lder{Y}|+1)}\overset{\rightarrow}{\vc Y}\overset{\rightarrow}{\vc X}\,,
\ee
equips $\mathcal X^{1,\bullet}(\M)$ with the structure of a graded Lie algebra, i.e., it satisfies the {\em graded Jacobi identity}
\bb{DGJacobi}
(-1)^{(|\lder{X}|+1)(|\lder{Z}|+1)}[[\lder{X},[[\lder{Y},\lder{Z}]]]]+\quad\text{cyclic permutations}=0\,.
\ee

Sometimes matters simplify by using {\em right vector fields}, which are defined as follows:
\bb{DGrightvf}
(f)\overset{\leftarrow}{\vc X}:=(-1)^{(|\lder{X}|+1)(|f|+1)}\overset{\rightarrow}{\vc X}(f)\,.
\ee
From (\ref{DGgLeibnizvf}) and (\ref{DGrightvf}) we deduce the Leibniz rule for right vector fields
\bb{BVDG0'}
(f\cdot g)\overset{\leftarrow}{\vc X}=f\cdot (g)\overset{\leftarrow}{\vc X}+(-1)^{(|\vc X|+1)|g|}(f)\overset{\leftarrow}{\vc X}\cdot g\,.
\ee

In local coordinates $\{z^1, \ldots, z^{m+n}\}$, the left vector fields \\
$\{ \frac{\lpartial}{\partial z^i}\}_{i = 1, \ldots, m+n}$, determined by
\begin{equation}\label{basisTU}
\frac{\overset{\rightarrow}{\partial}}{\partial z^i}(z^j) = \delta^j_i\,,
\end{equation}
form a local basis of $\mathcal X^{1,\bullet}(\U)$ as a $C^\infty(\U)$-module.
We then obtain
\begin{equation*}
(z^j)\frac{\overset{\leftarrow}{\partial}}{\partial z^i} = \delta^j_i\,,
\end{equation*}
for the associated right vector fields, and the graded commutation relations
\begin{equation}\label{gradedcomder}
\left[\left[\frac{\overset{\rightarrow}{\partial}}{\partial z^i},\frac{\overset{\rightarrow}
{\partial}}{\partial z^j}\right]\right] = \left[\left[\frac{\overset{\leftarrow}{\partial}}{\partial z^i},\frac{\overset{\leftarrow}{\partial}}{\partial z^j}\right]\right] = 0 
\end{equation}
hold.
By definition,
\begin{equation*}
\left |\frac{\overset{\leftarrow}{\partial}}{\partial z^i}\right |  = \left |\frac{\overset{\rightarrow}{\partial}}
{\partial z^i} \right | = - |z_i|-1.
\end{equation*}

Next, we define (see (\ref{DGsymmalg}))
$$
\mathcal X^{p,\bullet}(\M):=S^p_{C^\infty(\M)}(\mathcal X^{1,\bullet}(\M))\,,
$$
and the graded commutative algebra
$$
\mathcal X(\M):=\bigoplus_{p,q}\mathcal X^{p,q}(\M)\,,
$$
where "graded commutative" is to be understood w.r.t. the total degree:
An element, $\chi$, of $\mathcal X^{p,q}(\M)$ is called a {\em multi-vector field} and has total degree $|\chi|=q-p$.
We use the symbol $\wedge$ for the associative product in $\mathcal X^{p,q}(\M)$, i.e., for $\chi_1\in\mathcal X^{p_1,q_1}(\M)$ and $\chi_2\in\mathcal X^{p_2,q_2}(\M)$ we have
\bb{DGvectorwedge}
\chi_1\wedge\chi_2=(-1)^{|\chi_1||\chi_2|}\chi_2\wedge\chi_1\in\mathcal X^{p_1+p_2,q_1+q_2}(\M)\,.
\ee

We introduce {\em graded differential forms} starting with one-forms, which are defined to be dual to the (left-) vector fields:
\bb{DGoneform}
\Omega^{1,q}(\M):={\rm Hom}(\mathcal X^{1,\bullet}(\M),C^\infty(\M))^q\,.
\ee
Differential forms of higher form degree are elements of the graded commutative algebra
$$
\Omega(\M):=\bigoplus\limits_{p,q}\Omega^{p,q}(\M)\,,
$$
with
$$
\Omega^{p,\bullet}(\M):=S^p_{C^\infty(\M)}(\Omega^{1,\bullet}(\M))\,.
$$
Again, "graded commutative" is understood w.r.t. the total degree, $|\eta|:=p+q$, for $\eta\in\Omega^{p,q}(\M)$.
The associative product in $\Omega^{p,q}(\M)$ is denoted by $\wedge$, and
\begin{equation}\label{gradedcomforms}
\eta_1 \wedge \eta_2 = (-1)^{|\eta_1| |\eta_2|} \eta_2 \wedge \eta_1\,.
\end{equation}

According to (\ref{DGoneform}), vector fields pair with one-forms. We write
$$
\iota_{\lder{X}}\eta:=\eta(\lder{X})
$$
and extend $\iota_{\lder{X}}$ to an {\em interior multiplication} imposing the graded Leibniz rule
\[ \iota_{\lder{X}}(\eta_1 \wedge \eta_2) = \iota_{\lder{X}} \eta_1 \wedge \eta_2 + (-1)^{|\eta_1| |\lder{X}|} \eta_1 \wedge \iota_{\lder{X}} \eta_2,\]
for $\eta_1, \eta_2 \in \Omega(\M)$, as well as the rule
\bb{DGIntMult2}
\iota_{\lder{X}\wedge\lder{Y}}:=\iota_{\lder{X}}\iota_{\lder{Y}}\,,
\ee
for $\lder{X}, \lder{Y}\in\mathcal X^{1,\bullet}(\M)$.
For $\chi\in\mathcal X^{p,q}(\M)$, the operator $\iota_{\chi}$ has degree 
$$
|\iota_{\chi}|=|\chi|=q-p\,,
$$
and it follows immediately from (\ref{DGvectorwedge}) and (\ref{DGIntMult2}) that
$$
[[\iota_{\lder{X}},\iota_{\lder{Y}}]]=0\,,\quad \lder{X},\lder{Y}\in \mathcal X^{1,\bullet}(\M)\,.
$$

Next, we show that there is a unique {\em exterior differential}, $d$, of degree $|d|=1$, satisfying the following equalities:
\begin{align}
d(\eta_1\wedge\eta_2)&=d\eta_1\wedge \eta_2+(-1)^{|\eta_1|}\eta_1\wedge d\eta_2\quad\text{(graded Leibniz rule)}\,,\label{DGdLeibniz}\\
\iota_{\lder{X}}df&=\lder{X}(f)\,,\label{DGdf}\\
\frac{1}{2}[[d,d]]&=d^2=0\,.\label{DGdsquare}
\end{align}
From (\ref{DGdLeibniz}) to (\ref{DGdsquare}) it is clear that we only need to verify condition (\ref{DGdsquare}) on functions.
We introduce local coordinates $\{z^1, \ldots, z^{m+n}\}$ and calculate, using (\ref{DGdf}),
$$
df=dz^i\frac{\lpartial f}{\partial z^i}\,.
$$
Hence,
$$
d^2f=(-1)^{|z^i|+1}dz^i\wedge dz^j\frac{\lpartial^2 f}{\partial z^j\partial z^i}\,,
$$
which vanishes due to (\ref{DGgLA}), (\ref{gradedcomder}) and (\ref{gradedcomforms}).

Finally, the \emph{Lie--derivative} of a differential form, $\eta$, along a vector field, $\lder{X}$, is given by
\begin{equation}\label{BVDefLie}
L_{\lder{X}}\eta := [[\iota_{\lder{X}}, d ]] \eta = \iota_{\lder{X}}(d \eta) - (-1)^{|\lder{X}|} d(\iota_{\lder{X}} \eta)\,.
\end{equation}
It has degree $|L_{\lder {X}}|=|\lder{X}|+1$ and obeys the commutation relations
\begin{align}
[[L_{\lder{X}},L_{\lder{Y}}]]&=L_{[[\lder{X},\lder{Y}]]}\,,\label{DGLL}\\
[[L_{\lder{X}},\iota_{\lder{Y}}]]&=\iota_{[[\lder{X},\lder{Y}]]}\,,\label{DGLiota}
\end{align}
with $[[\lder{X},\lder{Y}]]$ as defined in (\ref{DGgLA}).
\subsection{Odd Symplectic Manifolds}
An \emph{odd symplectic manifold, $(\mathcal E, \omega)$}, is a graded manifold $ \E$ equipped with an {\em odd symplectic structure}, more precisely, with a closed, non-degenerate two-form $\omega$ of ghost degree $-1$, i.e., $\omega\in\Omega^{2,-1}( \E)$.

To each function $f\in C^\infty( \E)$ we associate a {\em Hamiltonian vector field}, $\lder{X}_f$, defined by
\bb{DGHamVF}
\iota_{\lder{X}_f}\omega:=(-1)^{|f|}df\,,
\ee
of degree
\bb{DGdegHamVF}
|\lder{X}_f|=|f|\,.
\ee
The odd symplectic form $\omega$ yields the \emph{odd Poisson structure} defined by
\bb{DGPoisson}
\{f,g\}:=\lder{X}_f(g)=\iota_{\lder{X}_f}dg=(-1)^{|g|}\iota_{\lder{X}_f\wedge\lder{X}_g}\omega\,,\quad f,g\in C^\infty( \E)\,,
\ee
where the second equality follows from (\ref{DGdf}) and the third from (\ref{DGHamVF}) and (\ref{DGIntMult2}).
From (\ref{DGLiota}), (\ref{DGHamVF}) and (\ref{DGPoisson}) we derive the formula
\bb{DGPoissonHamVF}
\lder{X}_{\{f,g\}}=[[\lder{X}_f,\lder{X}_g]]\,.
\ee

From these formulae together with (\ref{DGgLeibnizvf}) to (\ref{DGJacobi}) it is straightforward to derive graded versions of the usual properties of a Poisson bracket.
\begin{lemma}\label{BVLPoisson}
For an odd symplectic graded manifold $( \E,\omega)$, the bracket defined in (\ref{DGPoisson}) satisfies the following properties, for all homogeneous $f, g, h\in C^\infty( \E)$:
\begin{enumerate}
\item Graded Commutation Relation: $\{f,g\}=-(-1)^{(|f|+1)(|g|+1)}\{g,f\}$, \label{PBProp1} \\
\item Graded Jacobi Identity: $(-1)^{(|h|+1)(|g|+1)}\{h,\{f,g\}\}+\text{cyclic permutations}=0$,  \quad  \label{Jacobi}\\
\item Graded Leibniz Rule: $\{f,gh\}=\{f,g\}h+(-1)^{|g|(|f|+1)}g\{f,h\}$. \label{BVLeibniz}
\end{enumerate}
\end{lemma}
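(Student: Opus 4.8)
The plan is to derive all three properties from the structural identities already in hand---(\ref{DGPoisson}), (\ref{DGPoissonHamVF}), (\ref{DGIntMult2}), (\ref{DGvectorwedge}), together with the degree formula (\ref{DGdegHamVF}) and the vector-field Leibniz and Jacobi rules (\ref{DGgLeibnizvf}), (\ref{DGJacobi})---treating the Leibniz rule and the commutation relation as essentially immediate and reserving the real work for the Jacobi identity.

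For the Graded Leibniz Rule (\ref{BVLeibniz}) I would start from the first expression in (\ref{DGPoisson}), $\{f,gh\}=\lder{X}_f(gh)$, and apply the graded Leibniz rule (\ref{DGgLeibnizvf}) for the vector field $\lder{X}_f$. Since (\ref{DGdegHamVF}) gives $|\lder{X}_f|=|f|$, the sign $(-1)^{(|\lder{X}_f|+1)|g|}$ occurring there becomes $(-1)^{(|f|+1)|g|}$, which is exactly the sign in (\ref{BVLeibniz}); re-expressing $\lder{X}_f(g)$ and $\lder{X}_f(h)$ as brackets finishes this part. For the Graded Commutation Relation (\ref{PBProp1}) I would instead use the third expression in (\ref{DGPoisson}), $\{f,g\}=(-1)^{|g|}\iota_{\lder{X}_f\wedge\lder{X}_g}\omega$, which is manifestly antisymmetric up to signs. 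Commuting the two Hamiltonian vector fields via (\ref{DGvectorwedge}) produces a factor $(-1)^{|\lder{X}_f||\lder{X}_g|}=(-1)^{|f||g|}$, and comparing with the same expression for $\{g,f\}$ yields $\{f,g\}=(-1)^{|f|+|g|+|f||g|}\{g,f\}$. Expanding $(|f|+1)(|g|+1)$ modulo $2$ shows this equals $-(-1)^{(|f|+1)(|g|+1)}\{g,f\}$, as claimed.

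The Graded Jacobi Identity (\ref{Jacobi}) is the main obstacle, and the key is to avoid a naive route. One is tempted to apply the assignment $f\mapsto\lder{X}_f$ to the whole cyclic sum and invoke the graded Jacobi identity (\ref{DGJacobi}) for vector fields together with (\ref{DGPoissonHamVF}); but since $\lder{X}_f=0$ precisely when $df=0$, this argument only shows that the cyclic sum is locally constant, not that it vanishes. Instead I would work at the level of functions: evaluating the graded commutator $[[\lder{X}_h,\lder{X}_f]]$, rewritten via (\ref{DGPoissonHamVF}) as $\lder{X}_{\{h,f\}}$, on the function $g$, and unwinding the definition (\ref{DGgLA}), gives the derivation (Loday) form of the identity,
\[
\{\{h,f\},g\}=\{h,\{f,g\}\}-(-1)^{(|h|+1)(|f|+1)}\{f,\{h,g\}\}\,,
\]
with no constant ambiguity.

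Converting this into the manifestly cyclic form stated in (\ref{Jacobi}) is then a matter of applying the already-established antisymmetry (\ref{PBProp1}) to the left-hand bracket---using that $|\{h,f\}|=|h|+|f|+1$, which follows from (\ref{DGdegHamVF}) and the relation between total and ghost degree---and relabelling. The only delicate point is the bookkeeping of the shifted signs $(-1)^{(\,\cdot\,+1)(\,\cdot\,+1)}$ characteristic of the odd bracket, which I would control by tracking the degree of each inner bracket through every step; a sanity check in the case $|f|=|g|=|h|=0$ (where the odd bracket is symmetric) already confirms that the Loday form collapses to the stated cyclic sum.
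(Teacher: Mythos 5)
Your proposal is correct and follows exactly the route the paper indicates: the paper gives no written proof, stating only that the three properties follow "straightforwardly" from (\ref{DGPoisson}), (\ref{DGPoissonHamVF}) and the identities (\ref{DGgLeibnizvf})--(\ref{DGJacobi}), and your argument is precisely that derivation carried out in detail. Your extra care on the Jacobi identity---evaluating $[[\lder{X}_h,\lder{X}_f]]=\lder{X}_{\{h,f\}}$ on a function to get the Loday form rather than merely comparing Hamiltonian vector fields, which would leave a locally constant ambiguity---is a genuine and correct refinement of the "straightforward" step, and the sign bookkeeping in passing to the cyclic form checks out.
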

In the mathematics literature, a graded commutative associative algebra together with a bracket satisfying the equations in Lemma \ref{BVLPoisson} is called a {\em Gerstenhaber algebra}.

As we have seen in Subsection 2.2, the graded manifold $ \E$ has a distinguished even submanifold, namely its body, $E_0$.
Hence, $\E$ is associated with some odd vector bundle, $E$, over $E_0$.
Since $\omega$ is of ghost degree $-1$ it induces a symplectic structure on $E$, such that the fibres and the zero section of $E$ are Lagrangian.
It is then a standard result from symplectic geometry that $E$ is isomorphic to the cotangent bundle over $E_0$.
%A section $p \in \Gamma(E^\ast)$ gives rise to a graded function in $C^\infty(\mathcal E)$ (see (\ref{gradedmfd})), which we also denote by $p$, and to which we assign the derivation
%\bb{BVDarboux}
%\iota^\ast\{\pi^\ast(\cdot),p\}: C^\infty(E_0) \rightarrow C^\infty(E_0).
%\ee
%Here $\iota: E_0 \rightarrow \mathcal E$ is the natural inclusion of the body and $\pi: \mathcal E \rightarrow E_0$ is any projection onto the body.
%Since $\omega$ is non-degenerate, each derivation on $C^\infty(E_0)$ can be written in the form (\ref{BVDarboux}).
Since $\omega$ has ghost degree $-1$, we need to increase the degree of the fibres of this co-tangent bundle by one. We denote this by $E=\Pi T^*E_0$.
We have proven the following version of the Darboux theorem for graded odd symplectic manifolds:
\begin{lemma}\label{DGDarboux}
Any graded manifold $ \E$, equipped with a non-degenerate closed two-form $\omega$ of ghost degree $-1$, is associated with the graded vector bundle $\Pi T^*E_0$ over its body, where the functor $\Pi$ increases the ghost degree by one.
\end{lemma}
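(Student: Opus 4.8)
The plan is to reduce the statement to a purely \emph{linear} (fibrewise) assertion and then dispatch it by a degree argument, so that no patching of local Darboux charts is required. First I would invoke the extension of Batchelor's theorem recalled in Subsection 2.2 to fix a graded vector bundle $E\to E_0$ over the body $E_0$ whose associated graded manifold is $\E$. The entire content of the lemma is then the claim that $E\cong\Pi T^*E_0$ as graded vector bundles. The advantage of this formulation is that the map $X\mapsto\iota_X\omega$ is globally defined and, by non-degeneracy of $\omega$, is an isomorphism $\mathcal X^{1,\bullet}(\E)\to\Omega^{1,\bullet}(\E)$; restricting it along the zero section yields a globally defined, fibrewise non-degenerate pairing on $T\E|_{E_0}=TE_0\oplus E$, of ghost degree $-1$. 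It is this pairing that plays the role of the ``symplectic vector bundle'', and because it is built directly from the global form $\omega$ there is no gluing to worry about.

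Next I would exploit the parity of the ghost degree of $\omega$. Since the induced pairing has ghost degree $-1$, it can be nonzero on a pair of homogeneous vectors only if their ghost degrees sum to $-1$, that is, only if one vector is even and the other odd. Consequently the even subbundle $TE_0$ and the odd subbundle $E$ are each \emph{isotropic}: two even vectors never pair, and two odd vectors never pair, simply because an even sum can never equal the odd number $-1$. Thus $TE_0$ and $E$ are complementary isotropic subbundles of complementary rank, hence Lagrangian, and non-degeneracy of the pairing forces its restriction to a \emph{perfect} pairing between $TE_0$ and $E$. This is exactly the graded incarnation of the standard fact that a symplectic vector space equipped with two complementary Lagrangian subspaces is canonically identified with a cotangent space: the perfect pairing identifies the fibre $E$ with a shifted dual of $TE_0$, i.e. with a cotangent bundle over $E_0$. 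Note that closedness of $\omega$ is not needed for this bundle-level statement; it would only be required for the stronger assertion that $\omega$ can be brought into standard form.

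Finally I would account for the grading. Because the perfect pairing carries ghost degree $-1$ rather than $0$, the identification of $E$ with the dual of $TE_0$ is not the tautological degree-preserving one of $T^*E_0$ but is shifted by one unit of ghost degree; this one-unit shift is precisely the functor $\Pi$, yielding $E\cong\Pi T^*E_0$ with $\Pi$ increasing the ghost degree by one as asserted. The main obstacle I anticipate is the sign-and-degree bookkeeping, on two fronts: one must make precise the passage from $\omega$, a two-form on $\E$ with possibly non-constant coefficients, to the fibrewise pairing on $TE_0\oplus E$ and confirm that non-degeneracy of the former yields non-degeneracy of the latter; and one must verify that a two-form of \emph{odd} total degree induces a graded-\emph{symmetric} pairing, so that the notions ``isotropic'' and ``Lagrangian'' and the complementary-Lagrangian lemma transcribe correctly to the graded setting and match each even generator of $E_0$ with its conjugate odd generator of $E$. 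Everything else is routine transcription of the linear symplectic splitting to the graded category.
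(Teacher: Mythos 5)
Your proposal is correct and follows essentially the same route as the paper: invoke the (extended) Batchelor theorem to realize $\E$ as a graded vector bundle $E\to E_0$, observe that $\omega$ induces a fibrewise symplectic pairing for which the zero section and the fibres are complementary Lagrangians, and conclude $E\cong \Pi T^*E_0$ by the standard linear fact, with the $\Pi$ accounting for the ghost degree $-1$ of $\omega$. You in fact supply the parity argument for isotropy of the two subbundles, which the paper asserts without proof, so nothing is missing.
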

Choose a projection $\pi:\, \E\rightarrow E_0$ and a set of local\footnote{Remember that $E_0$ is an even manifold associated with some even vector bundle over a manifold $M$. Locality is to be understood w.r.t. $M$.} 
coordinates, $\{\hat x^i\}$, on $E_0$.
Together with the corresponding local coordinate vector fields $\{\partial/\partial\hat x^i\}$, they define a set of local graded coordinate functions, $\{x^i\,,\,x^\dag_i\}$ (with $|x^\dag_i|=-|x^i|-1$), on $\E$ given by the equations
\bb{Darbouxproof}
x^i:=\pi^*(\hat x^i)\,,\quad \{\pi^*(.),x^\dag_i\}:=\pi^*\left(\frac{\partial}{\partial\hat x^i}(.)\right)\,.
\ee
We call $x^\dagger_i$ the \emph{anti-coordinate} corresponding to the coordinate $x^i$, and the coordinates together with their anti-coordinates form  a system of local \emph{Darboux coordinates}.  With respect to these Darboux coordinates, the odd symplectic two-form $\omega$ is given by
\begin{equation}\label{omindarb}
\omega = dx^i \wedge dx_i^{\dagger}\,.
\end{equation}
In order to see this, we derive from (\ref{DGrightvf}), (\ref{DGHamVF}), (\ref{DGPoisson}) and (\ref{omindarb}) the local expression for the odd Poisson structure
\begin{equation*}
\{f,g\} = \frac{f \overset{\leftarrow}\partial}{\partial x^i}\frac{\overset{\rightarrow}\partial g}{\partial x_i^{\dagger}} - \frac{f \overset{\leftarrow}\partial}{\partial x_i^{\dagger}}\frac{\overset{\rightarrow}\partial g}{\partial x^i}, \quad {\rm for} \,f, g \,\in \,C^{\infty}( \E)\,.
\end{equation*}
For $f=x^i$, $g=x^\dag_j$, we find agreement with (\ref{Darbouxproof}).

The identification of graded functions on $\mathcal E$ with multi-vector fields on $E_0$ depends on the choice of projection $\pi: \mathcal E \rightarrow E_0$, and, by abuse of notation, we write $\Pi T^\ast E_0$ for $\mathcal E$ together with a choice of projection $\pi$. 
Analogously, we define the odd cotangent bundle, $\Pi T^\ast \mathcal M$, of an arbitrary graded manifold, $\mathcal M$.
If $\{z^i\}$ are local coordinates on $\M$, corresponding anti-coordinates, $\{z^\dag_i\}$, have degree $|z^\dag_i|=-|z^i|-1$ and the two-form $\omega$, which, in local coordinates, is given by
$$
\omega=dz^i\wedge dz^\dag_i\,
$$
defines a natural odd symplectic structure on $\E$.
The corresponding odd Poisson structure is then given by
$$
\{f,g\} = (-1)^{|z^i|}\left(\frac{f \overset{\leftarrow}\partial}{\partial z^i}\frac{\overset{\rightarrow}\partial g}{\partial z_i^{\dagger}} - \frac{f \overset{\leftarrow}\partial}{\partial z_i^{\dagger}}\frac{\overset{\rightarrow}\partial g}{\partial z^i}\right), \quad {\rm for} \,f, g \,\in \,C^{\infty}( \E).
$$

Obviously, odd symplectic manifolds are of dimension $(m,m)$, and, being odd, the symplectic structure vanishes on its $(m,0)$-dimensional body. Hence, the body of an odd symplectic manifold is a {\em Lagrangian submanifold}, locally given by the ideal
$$
I_0:=\langle x^\dag_i\rangle_{i=1,\dots,m}\,,
$$
for any choice of local Darboux coordinates.
In the $BV$-framework, we consider more general $(k,m-k)$-dimensional Lagrangian submanifolds given by a sheaf of ideals,
\bb{DGLagSM}
I(U)=\langle f^i,\zeta_\alpha\rangle^{i=k+1,\dots,m}_{\alpha=1,\dots,k}\,,
\ee
where the $f$'s ($\zeta$'s) are homogeneous even (odd) functions in $C^\infty(\U)$ (also called {\em constraints}) that are in involution, i.e.,
\bb{DGLagSM2}
\{f^i,f^j\}=\{f^i,\zeta_\alpha\}=\{\zeta_\alpha,\zeta_\beta\}=0\,,\quad k+1\leq i,j\leq m\,,\quad 1\leq \alpha,\beta\leq k\,.
\ee

Consider a Lagrangian submanifold, $\L$, of $ \E$, given by the sheaf of ideals (\ref{DGLagSM}).
The body, $L\subseteq E_0$, of $\L$ is defined as
$$
L:=\L\cap E_0\,,
$$
i.e., $C^\infty(L):=C^\infty(\E)/(I\cup I_0)$. 
A projection, $\pi:  \E \rightarrow E_0$, of $ \E$ onto its body is said to be {\em adapted} to $\L$, write $\pi=\pi_\L$, if
$$
\pi_\L(\L)=L\,,
$$
i.e., if $\pi_\L^*(f+I_0)\in I$, for all $f\in I$.

\begin{lemma}\label{BVLadapted}
For any Lagrangian submanifold $\L$ of the odd symplectic manifold $ \E$ there is a projection $\pi_\L:  \E \rightarrow E_0$ adapted to $\L$.
\end{lemma}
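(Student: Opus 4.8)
The plan is to reduce the statement to a local normal form for $\L$ and then to patch the local solutions together by a partition of unity. First I would unwind the defining condition. Since the ideal $I$ of (\ref{DGLagSM}) is generated over $C^\infty(\E)$ by the even constraints $f^i$ and the odd constraints $\zeta_\alpha$, and since the $\zeta_\alpha$, being odd, already lie in $I_0$, the image of $I$ in $C^\infty(E_0)=C^\infty(\E)/I_0$ is the ideal generated by the classes $f^i+I_0$; these cut out the body $L=\L\cap E_0$. As $\pi_\L^*$ is an algebra homomorphism and $I$ is an ideal, the requirement $\pi_\L^*(f+I_0)\in I$ for all $f\in I$ is then equivalent to the finitely many conditions $\pi_\L^*(f^i+I_0)\in I$. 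Thus it suffices to construct a projection whose pullback carries each body-constraint back into $I$.

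Next I would produce such a projection locally. Fix a point of $L$. By the involution relations (\ref{DGLagSM2}) the $m$ functions $\{f^i,\zeta_\alpha\}$ are pairwise in Poisson involution, and --- assuming, as the notion of Lagrangian submanifold requires, that their differentials are independent --- the graded Darboux theorem (Lemma \ref{DGDarboux}), together with the standard completion of an involutive system to canonical coordinates, allows me to extend them to a full system of local Darboux coordinates $\{x^i,x^\dag_i\}$ with $\omega=dx^i\wedge dx^\dag_i$. Since involution forbids any two of the $\{f^i,\zeta_\alpha\}$ from being conjugate, they constitute exactly one member of each conjugate pair, so in these coordinates $\L$ is a \emph{coordinate} Lagrangian and its body $L$ is a coordinate subspace of $E_0=\{x^\dag_i=0\}$. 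The coordinate projection retaining the even coordinates of $E_0$ is then manifestly adapted, each $f^i+I_0$ pulling back to a coordinate function lying in $I$. This gives, on a neighbourhood of every point of $L$, a projection adapted to $\L$.

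Finally I would glue these local adapted projections, over a locally finite cover, into a global one, and this is where I expect the only real difficulty. A projection is an algebra-homomorphism splitting $\pi^*$ of $\iota^*$, so writing $\pi^*=\pi_0^*+D$ relative to a fixed projection $\pi_0$, multiplicativity reads $D(ab)=\pi_0^*(a)D(b)+D(a)\pi_0^*(b)+D(a)D(b)$, which is \emph{not} linear in $D$; hence a naive average $\sum_j\rho_j\pi_j^*$ need not be a homomorphism and the usual partition-of-unity argument breaks down. To repair this I would exploit that $\E$ is finite-dimensional, so $I_0$ is nilpotent, $I_0^{N+1}=0$, while $D$ takes values in $I_0$. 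Modulo $I_0^2$ the quadratic term disappears and $D$ is a genuine derivation $C^\infty(E_0)\to I_0/I_0^2$, and such derivations form a $C^\infty(E_0)$-module on which convex combinations are legitimate. I would therefore present the set of projections as a torsor under the nilpotent group of vertical automorphisms of $\E$ fixing $E_0$, average linearly on the corresponding Lie algebra of vertical vector fields (where the partition of unity applies) and transport the result back through the exponential; equivalently, one builds $\pi_\L$ order by order in the $I_0$-adic filtration, correcting the multiplicativity defect at each finite stage. At every stage the adaptation condition $\pi^*(f^i+I_0)\in I$ is an affine, indeed convex, constraint on the correction and so survives the averaging, yielding a global projection $\pi_\L$ adapted to $\L$. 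The local normal form is routine once the involution relations are used; the delicate point is precisely this gluing of homomorphism splittings.
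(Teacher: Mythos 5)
Your proof follows the same route as the paper's: use the involution relations (\ref{DGLagSM2}) to extend the constraints $f^i,\zeta_\alpha$ to local Darboux coordinates in which $\L$ becomes a coordinate Lagrangian, take the coordinate projection locally, and glue by a partition of unity. The paper compresses the last step into a ``standard partition of unity argument''; your order-by-order correction in the $I_0$-adic filtration is a correct (and more honest) account of what that argument actually requires, since, as you rightly note, algebra-homomorphism splittings cannot simply be averaged, while the adaptation condition itself is affine and survives the correction procedure.
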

\begin{proof}
Let $\L$ be defined by the constraints (\ref{DGLagSM}).
Since the Hamiltonian vector fields $\lder {X}_{f^i}$, $\lder{X}_{\zeta_\alpha}$ are in involution because of (\ref{DGPoissonHamVF}) and (\ref{DGLagSM2}), there are local Darboux coordinates\\ $\{x^1, \ldots, x^m, x_1^{\dagger}, \ldots, x_m^{\dagger}\}$ such that
$$
x^i:=f^i\,,\quad k+1\leq i\leq m\,,\quad x^\dag_\alpha:=\zeta_\alpha\,,\quad 1\leq \alpha\leq k
$$
and 
$$
\frac{\lpartial}{\partial x^\alpha}=\lder {X}_{\zeta_\alpha}\,,\quad \quad 1\leq \alpha\leq k\,,\quad \frac{\lpartial}{\partial x^\dag_i}=\lder {X}_{f^i}\,, \quad k+1\leq i\leq m\,.
$$
Locally, we define $\pi^*_\L(\hat{x}^i):=x^i$, for $\hat x^i:=x^i+I_0$. Then the lemma follows from a standard partition of unity argument. 
\end{proof}

\subsection{Semidensities and the $BV$ Operator}\label{SemidensityBVOp}
Let $(\E,\omega)$ be an odd symplectic graded manifold. Clearly, $\omega\wedge\omega=0$, and hence $\omega\wedge(.)$ yields a co-boundary operator on $\Omega(\E)$.
Severa \cite{Severa} recognized the corresponding cohomology classes to be the semidensities, which Khudaverdian \cite{Khudaverdian} had previously introduced as the natural objects to be integrated over Lagrangian submanifolds in $\E$. Furthermore, he gave a cohomological definition of the so-called $BV$-operator, a co-boundary operator acting on semidensities.
In this subsection, we introduce the complex given by semidensities and the $BV$- operator and establish an isomorphism with the de Rham complex of the body of $\E$.
Instead of the spectral sequences used by Severa, we shall use the language of Homological Perturbation Theory ($HPT$), which will also be useful in Section 3.

The basic object in $HPT$ is a {\em contraction}, as introduced by Eilenberg and McLane in the fifties.
It "contracts some big object onto some small object" without loss of cohomology.
In the easiest case, these objects are differential graded modules:
\begin{definition}\label{BVDefcontraction}
A {\em contraction} consists of two differential graded modules $(M,d_M)$ and $(N,d_N)$ over some ring, together with chain maps $\iota: N \rightarrow M$, $p: M \rightarrow N$, i.e., 
$$
d_M\circ\iota=\iota\circ d_N\,,\quad d_N\circ p=p\circ d_M\,,
$$
and a morphism, $h:  M \rightarrow M$, of degree $-1$ such that
\begin{enumerate}
\item $p \circ \iota= \id_N$\,, \label{cond2}
\item $\iota \circ p - \id_M=hd_M+d_Mh$\,,
\item $h^2=h \circ \iota= p \circ h=0$\,. 
\end{enumerate}
Then $p$ is a surjection called the \emph{projection}, $\iota$ is an injection called the \emph{inclusion} and $h$ is called the \emph{homotopy operator}. We write
\begin{equation*}
\xymatrix@1{
(N,d_N) \ar@<3pt>[r]^-\iota & (M,d_M) \quad\quad\quad\ar@<3pt>@{->>}^-p[l]\ar@(dr,ur) & \hspace{-2cm}h\,.}
\end{equation*}
\end{definition}
Condition {\it 2.} implies that the cohomologies of $M$ and $N$ are isomorphic, since it implies that the kernel of $p$ has trivial cohomology:
$$
p\alpha=0\,,\quad d_M\alpha=0\quad\Rightarrow\quad -\alpha=d_M\circ h(\alpha)\,.
$$
Conditions {\it 3.} are also called \emph{side conditions} and can always be satisfied (see, e.g., \cite{LambeStasheff}). 
Lambe and Stasheff \cite{LambeStasheff} adapted this definition to the situation where the objects $M$ and $N$ carry, in addition, an algebra or coalgebra structure. We shall come back to this special case in Section 3.

The basic theorem of $HPT$ is the so-called {\em Perturbation Lemma} and goes back to Brown \cite{RB} and Shih \cite{WS}.
A \emph{perturbation} of the differential $d_M$ is a morphism  $\delta: M \rightarrow M$ of degree $+1$, such that $(d_M + \delta)^2 = 0$.

\pagebreak
\begin{theorem}[Perturbation Lemma] Given a contraction 
\begin{equation*}
\xymatrix@1{
(N,d_N) \ar@<3pt>[r]^-\iota & (M,d_M)\quad\quad \ar@<3pt>@{->>}^-p[l]\ar@(dr,ur) &\hspace{-2cm} h\,}
\end{equation*}
and a perturbation, $\delta$, of $d_M$, with the property that the series below converge, then there is a contraction
\begin{equation*}
\xymatrix@1{
(N,d_N+\tilde\delta) \ar@<3pt>[r]^-{\tilde\iota} & (M,d_M+\delta)\quad\quad\quad\quad \ar@<3pt>@{->>}^-{\tilde p}[l]\ar@(dr,ur) &\hspace{-3cm} \tilde h\,,}
\end{equation*}
where
\begin{eqnarray*}
\tilde\delta & = &\Sum{n\geq 0} p {\delta} (h {\delta})^n \iota\,, \\
\tilde\iota& = & \Sum{n\geq 0} (h {\delta})^n \iota\,, \\
\tilde p & = & \Sum{n\geq 0} p ({\delta} h)^n\,,\\
\tilde h & = & \Sum{n\geq 0} (h{\delta})^n h\,.
\end{eqnarray*}
\end{theorem}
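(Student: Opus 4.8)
The plan is to verify directly that the four operators $\tilde\delta$, $\tilde\iota$, $\tilde p$, $\tilde h$ assemble into a contraction of $(N, d_N + \tilde\delta)$ into $(M, d_M + \delta)$; that is, we must check that $\tilde\iota$ and $\tilde p$ are chain maps for the perturbed differentials, that $\tilde p \circ \tilde\iota = \mathrm{id}_N$, that $\tilde\iota \circ \tilde p - \mathrm{id}_M = (d_M + \delta)\tilde h + \tilde h (d_M + \delta)$, and that the three side conditions $\tilde h^2 = \tilde h \circ \tilde\iota = \tilde p \circ \tilde h = 0$ hold. The convergence of the geometric-type series in $h\delta$ and $\delta h$ is assumed in the hypothesis, so every manipulation below is justified termwise. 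The only structural tools needed are the defining relations of the unperturbed contraction, namely $p\iota = \mathrm{id}_N$, $\iota p - \mathrm{id}_M = h d_M + d_M h$, the side conditions $h^2 = h\iota = ph = 0$, the chain-map relations $d_M \iota = \iota d_N$ and $d_N p = p d_M$, together with the perturbation condition $(d_M + \delta)^2 = 0$, which unpacks to $d_M^2 = 0$ together with $d_M \delta + \delta d_M + \delta^2 = 0$.

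First I would record the key algebraic identity. Writing $\Phi := \sum_{n \geq 0}(h\delta)^n = (1 - h\delta)^{-1}$ and $\Psi := \sum_{n\geq 0}(\delta h)^n = (1 - \delta h)^{-1}$ as formal inverses, one has $\tilde\iota = \Phi\iota$, $\tilde p = p\Psi$, $\tilde\delta = p\delta\Phi\iota = p\Psi\delta\iota$, and $\tilde h = \Phi h = h\Psi$ (the last equality from $(h\delta)^n h = h(\delta h)^n$). The natural strategy is to prove a single "master" relation for $\Phi$, namely that $d_M\Phi - \Phi d_M$ telescopes against $\delta$ using $d_M\delta + \delta d_M = -\delta^2$; more precisely, applying $d_M$ through the series and repeatedly commuting it past $h$ via condition {\it 2.} (in the form $d_M h + h d_M = \iota p - \mathrm{id}_M$) produces a collapsing sum. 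Carrying out this bookkeeping yields the relation $(d_M + \delta)\tilde\iota = \tilde\iota(d_N + \tilde\delta)$ and its dual $\tilde p(d_M + \delta) = (d_N + \tilde\delta)\tilde p$, establishing the chain-map property. The side conditions fall out more cheaply: $\tilde h^2 = \Phi h \cdot h\Psi = 0$ because $h^2 = 0$ kills the product once the series are interleaved; $\tilde h \tilde\iota = h\Psi\Phi\iota$ reduces using $h\iota = 0$; and $\tilde p\tilde h = p\Psi\Phi h = 0$ using $ph = 0$. The normalization $\tilde p\tilde\iota = \mathrm{id}_N$ follows from $p\iota = \mathrm{id}_N$ together with $ph = 0$ and $h\iota = 0$, which force all cross terms in $p\Psi\Phi\iota$ to vanish.

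The main obstacle is the homotopy identity $\tilde\iota\tilde p - \mathrm{id}_M = (d_M + \delta)\tilde h + \tilde h(d_M + \delta)$, which does not reduce to a single side condition but requires the full telescoping argument. Here I would substitute $\tilde h = \Phi h$ on one side and $\tilde h = h\Psi$ on the other, expand $(d_M + \delta)\Phi h + h\Psi(d_M + \delta)$, and then use the unperturbed homotopy $d_M h + h d_M = \iota p - \mathrm{id}_M$ together with the cocycle relation $d_M\delta + \delta d_M = -\delta^2$ to show that almost every term cancels in pairs, leaving exactly $\iota\Phi \cdot p\Psi - \mathrm{id}_M = \tilde\iota\tilde p - \mathrm{id}_M$. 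The delicate point is keeping track of the signs arising from the degrees of $d_M$ ($+1$), $\delta$ ($+1$) and $h$ ($-1$) as operators are commuted past one another; in the graded setting each transposition of two odd operators contributes a sign, and it is precisely the consistency of these signs with the relation $(d_M+\delta)^2 = 0$ that makes the telescoping close. Once this identity is verified, all four conditions of Definition~\ref{BVDefcontraction} hold and the perturbed diagram is a genuine contraction, completing the proof.
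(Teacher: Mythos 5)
Your proposal is correct and takes essentially the same route as the paper, which simply asserts that conditions \emph{1.--3.} of the definition of a contraction can be verified formally for $\tilde p$, $\tilde\iota$, $d_M+\delta$, $d_N+\tilde\delta$ and $\tilde h$; you carry out exactly that direct termwise verification, using the side conditions $h^2=h\iota=ph=0$ to kill cross terms and the relation $d_M\delta+\delta d_M=-\delta^2$ to make the homotopy identity telescope. (Minor slips only: $\tilde\iota\tilde p=\Phi\iota\, p\Psi$ rather than $\iota\Phi\cdot p\Psi$, and $\tilde h\tilde\iota=0$ needs $h^2=0$ as well as $h\iota=0$.)
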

%(Details and references can be found, e.g., in \cite{Real}.)
\begin{proof}
It is straightforward to verify that, formally, conditions {\it 1. - 3.} of Definition \ref{BVDefcontraction} hold for $\tilde p$, $\tilde\iota$, $d_M+\delta$, $d_N+\tilde\delta$ and $\tilde h$. 
\end{proof}

Next, let $(\E,\omega)$ be an $(m,m)$-dimensional odd symplectic manifold.
Since $\omega$ is non-degenerate, there is a two-vector-field $\chi \in  \mathcal X^{2,1}(\E)$ with $\iota_{\chi}\omega=m$.
In local Darboux coordinates (\ref{omindarb}) $\chi$ is given by
\begin{equation*}
\chi =  \frac{\lpartial}{\partial x^i} \wedge \frac{\lpartial}{\partial x_i^\dagger}.
\end{equation*}
For $\eta = f_{i_1,\dots,i_r}^{j_1,\dots,j_s}(x,x^\dag) dx^{i_1} \wedge \ldots \wedge dx^{i_r} \wedge dx_{j_1}^\dagger \wedge \ldots \wedge dx_{j_s}^\dagger \in \Omega(\E)$, we find that
\begin{equation}\label{chiprop}
\big( i_\chi\circ(\omega\wedge)+(\omega\wedge)\circ i_\chi \big) \eta=k(\eta)\eta, 
\end{equation}
where $k(\eta) = m - r + s \in \mathbb N_0$, and we define the homomorphism $h: \Omega(\E) \rightarrow \Omega(\E)$ of $C^{\infty}(\E)$--modules by
\begin{equation}\label{hdef}
h(\eta) := 
\begin{cases}
- \frac{1}{k(\eta)} \iota_\chi(\eta) & {\rm if} \ k(\eta) \neq 0,\\
0 & {\rm otherwise}.
\end{cases}
\end{equation}
Depending on the choice of a projection $\pi:\,\E\rightarrow E_0$, $k$ defines a grading on $\Omega(\E)$,
$$
\Omega(\E)=\bigoplus_{k=0}^{2m}\{\eta\in\Omega(\E)|k(\eta)=k\}\,,
$$
and $k(\eta)=0$ if and only if $\eta$ is in the $C^\infty(\E)$-module $\pi^*(\Omega^m(E_0))$.
Hence there is a contraction
\begin{equation}\label{BVBVcontraction}
\xymatrix@1{
(\pi^\ast(\Omega^m(E_0)),0) \ar@<3pt>@{->}[r]^-{\iota'} & (\Omega(\E),\omega\wedge)\quad\quad\quad\quad \ar@<3pt>@{->>}[l]^-p \ar@(dr,ur) &\hspace{-3cm} h\,,}
\end{equation}
and the differential on the l.h.s. of (\ref{BVBVcontraction}) is zero since $\omega\wedge\eta=0$ for $\eta\in\pi^\ast(\Omega^m(E_0))$. Therefore the l.h.s. of (\ref{BVBVcontraction}) coincides with the cohomology of the r.h.s.:

\begin{equation}\label{BVnormalform}
\pi^*(\Omega^m(E_0)) \stackrel{\cong}\longrightarrow {\rm H}(\Omega(\E),\omega\wedge), \quad
s \longmapsto [s].
\end{equation}
We write $\S(\E):={\rm H}(\Omega(\E),\omega\wedge)$ and call classes of forms in $\S(\E)$ {\em semidensities}.
According to (\ref{BVnormalform}), each semidensity, $[s] \in \S(\E)$, has a representative of the form $f \pi^{\ast} \hat\Omega$ with $f \in C^\infty(\E)$ and $\hat\Omega \in \Omega^m(E_0)$, called the \emph{normal form} of the semidensity $[s]$.

Since $|\omega|=1$ and $d\omega=0$, the de Rham differential $d$ is a perturbation of $\omega\wedge$, i.e., $(\omega\wedge(.) +d)^2=0$, and the Perturbation Lemma yields the contraction
\begin{equation}\label{BVcontraction1pert}
\hspace{2cm}\xymatrix@1{
(\S(\E), \Delta) \cong (\pi^\ast(\Omega^m(E)),\Delta)\ar@<3pt>[r]^-{\tilde\iota}   &(\Omega(\E),\omega\wedge(.) + d)\quad\quad\quad\quad\quad\quad\quad \ar@<3pt>@{->>}[l]^-{\tilde p}\ar@(dr,ur) &\hspace{-5cm} \tilde h\,,}
\end{equation}
with the differential $\Delta$ given by
\begin{equation}\label{Deltasum}
\Delta =p\Sum{n \geq 0} (dh)^n d\iota = pdhd\iota,
\end{equation}
because the summand corresponding to $n = 1$ is the only one that preserves $k(\eta)$ and thus the only one that is not annihilated by $p$. The differential $\Delta$ on $\pi^\ast(\Omega^m(E))$ determines the \emph{BV--operator} on $\S(\E)$, via the isomorphism (\ref{BVnormalform}). To compute the BV--operator on $\S(\E)$, also denoted by $\Delta$, take $[s] \in \S(\E)$ with $s \in \pi^\ast(\Omega^m(E))$. Then $k(s) = 0$ and $k(ds) = 1$, so that (\ref{chiprop}) implies 
\begin{equation}\label{ds}
ds = \iota_\chi(\omega \wedge ds) + \omega \wedge (\iota_\chi ds) = \omega \wedge t, 
\end{equation} 
where $t = \iota_\chi ds$. Noting that $k(t) = 1$, we obtain
\begin{eqnarray}\label{BVBVOp1}
\Delta[s]
& = & [dhd s]\nonumber \\
& = & [dh(\omega \wedge t)] \quad {\rm{by \ (\ref{ds})}} \nonumber\\
& = & - [d(\iota_{\chi}(\omega \wedge t))] \quad \text{by}\quad{\rm{(\ref{hdef})}} \nonumber\\
& = & - [d(t - \omega \wedge(\iota_{\chi} t))] \quad {\rm{by \ (\ref{chiprop})}} \nonumber\\
& = & - [dt]\nonumber\\
& = & - [d(\omega\wedge)^{-1}ds]\quad\text{by (\ref{ds})}\,,
\end{eqnarray}
which coincides (up to a minus sign) with the definition of the $BV$-operator given in \cite{Severa}.

The \emph{$BV$-Laplacian}, $\Delta_{\Omega}$, is a second order graded differential operator on $C^\infty(\E)$ that depends on the choice of pull--back $\Omega = \pi^\ast(\hat\Omega)$ of a volume form $\hat\Omega$ on the body $E_0$, and is determined by
\begin{equation}
\Delta [f \Omega] =: (\Delta_{\Omega}f)[\Omega].
\end{equation}
In local Darboux coordinates, with $\Omega=d x^1\wedge\ldots\wedge d x^m$, a direct calculation yields
\begin{equation*}
\Delta [f\Omega]  = - \frac{\overset{\rightarrow}{\partial}^2f}{\partial  x^i\partial x^\dagger_i} [\Omega]\,,
\end{equation*}
and thus
\begin{equation}\label{BVDelta0}
\Delta_{\Omega} = - \frac{\overset{\rightarrow}{\partial}^2}{\partial  x^i\partial x^\dag_i}.
\end{equation}

What prevents the $BV$-Laplacian from being a derivation on the associative algebra $C^\infty(\E)$ is the odd Poisson bracket:
\bb{BVLemmaDelta1}
\Delta_{\Omega}(f\cdot g)=(\Delta_{\Omega}f)\cdot g+(-1)^{|f|}\{f,g\}+(-1)^{|f|}f\cdot(\Delta_{\Omega}g)\,.
\ee
From (\ref{BVLemmaDelta1}) we deduce, however, that the $BV$-Laplacian is a derivation for the Poisson bracket:
\bb{BVLemmaDelta2}
\Delta_\Omega\{f,g\}=\{\Delta_\Omega f,g\} + (-1)^{|f|+1}\{f,\Delta_\Omega g\}\,.
\ee
In the mathematics literature, a Gerstenhaber algebra whose bracket is derived from a differential according to (\ref{BVLemmaDelta1}) is called a {\em $BV$ algebra}.

Given an odd symplectic manifold $(\E, \omega)$, we may also treat $\omega\wedge$ as a perturbation of the de Rham differential $d$ on $\E$. 
Since $\omega$ is exact, there is a one-form, $\theta$, s.t. $\omega=d\theta$ and since $\omega$ is non-degenerate, we may associate to $\theta$ a vector field $\lder{X}$, which, in local Darboux coordinates, takes the form $\lder{X} = x_i^\dagger \frac{\lpartial}{\partial x_i^\dagger}$.
For $\eta = f_K^{i_1,\dots,i_l,j_1,\dots,j_n}(x) x_{i_1}^\dagger \ldots x_{i_\ell}^\dagger dx^K\wedge dx_{j_1}^\dagger \wedge\dots\wedge dx_{j_n}^\dagger$, with $dx^K = dx^{k_1} \wedge \ldots \wedge dx^{k_q}$, we find that
\begin{equation}
(d \iota_{\lder{X}} + \iota_{\lder{X}} d ) \eta  =  k'(\eta)\eta\,,
\end{equation}
where $k'(\eta) = (n+\ell)$. We define a homomorphism $h': \Omega(\E) \rightarrow \Omega(\E)$ of $C^{\infty}(\E)$--modules by
\begin{equation*}
h'(\eta) = 
\begin{cases}
- \frac{1}{k'(\eta)} \iota_{\lder{X}}(\eta) & {\rm if} \ k'(\eta) \neq 0,\\
0 & {\rm otherwise}\,.
\end{cases}
\end{equation*}
The choice of a projection, $\pi:\E\rightarrow E_0$, together with the natural inclusion $\iota:E_0\rightarrow\E$ induces a contraction
\begin{equation}\label{BVBVcontraction2}
\xymatrix@1{
(\Omega(E_0),d) \ar@<3pt>[r]^-{\pi^*} & (\Omega(\E),d)\quad\quad\quad \ar@<3pt>@{->>}[l]^-{\iota^*} \ar@(dr,ur) &\hspace{-2cm} h'\,,}
\end{equation}
since $k'(\eta)=0$ if and only if $\pi^*\circ\iota^*(\eta)=\eta$.
(Note that $d$ on the l.h.s. of (\ref{BVBVcontraction2}) denotes the de Rham differential on the body.)

We then consider $\omega\wedge$ as a perturbation of $d$ and apply the Perturbation Lemma. Since $\iota^*\circ\omega\wedge=0$ neither the projection $\iota^*$ nor the differential on the l.h.s. of (\ref{BVBVcontraction2}) get modified and we obtain the contraction
\begin{equation}\label{BVBVcontraction3}
\xymatrix@1{
(\Omega(E_0),d) \ar@<3pt>[r]^-{\widetilde{\pi^*}} & (\Omega(\E),d + \omega \wedge)\quad\quad\quad\quad\quad \ar@<3pt>@{->>}[l]^-{\iota^*}\ar@(dr,ur) &\hspace{-4cm} \tilde{h'}\,.}
\end{equation}

Hence, composing the contractions (\ref{BVcontraction1pert}) and (\ref{BVBVcontraction3}), we obtain an isomorphism of differential graded modules,
\begin{equation}\label{Fpi2}
F^\pi: (\S(\E), \Delta) \overset{\cong}{\rightarrow} (\Omega(E_0), d)\,,\quad\text{with}\quad F^\pi:=\iota^*\circ \tilde{\iota'}\,.
\end{equation}
This isomorphism depends on the choice of a projection $\pi:\E\rightarrow E_0$, a fact one is advised to remember.
Note that $(F^\pi)^{-1}=\tilde p\circ\widetilde{\pi^*}$, since
\begin{multline*}
\tilde p\circ\widetilde{\pi^*}\circ\iota^*\circ \tilde{\iota'}
=\tilde p(id+\tilde{h'}(d+\omega\wedge)+(d+\omega\wedge)\tilde{h'})\tilde{\iota'}\\
=id+\Sum{n\geq 0}p(dh)^n\Sum{n\geq 0}(h'\omega\wedge)^nh'(d+\omega\wedge)\Sum{n\geq 0}(hd)^n\iota'\\
+\Sum{n\geq 0}p(dh)^n(d+\omega\wedge)\Sum{n\geq 0}(h'\omega\wedge)^nh'\Sum{n\geq 0}(hd)^n\iota'
=id\,,
\end{multline*}
where the last equality is most easily seen using local Darboux coordinates and keeping track of the number of $dx^\dag$'s that appear.
We also use the isomorphism (\ref{Fpi2}) in order to define a grading on semidensities:
\bb{SDgrading}
|[s]|:=|F^\pi[s]|\,.
\ee
Then $|\Delta|=1$.

For $f = g^{i_1,\dots,i_\ell}(x) x_{i_1}^\dagger \ldots x_{i_\ell}^\dagger$ and $s = f \Omega$, we compute
\begin{equation}\label{Fpi}
F^\pi([s]) = \iota^*\circ \tilde{\iota'} (s)=\iota^*\circ\Sum{n\geq0}(hd)^n\iota'(s) = \iota_{\chi(f)} \Omega,
\end{equation}
where $\chi(f) := g^{i_1,\dots,i_\ell}(x) \frac{\lpartial}{\partial x^{i_1}} \wedge \ldots\wedge \frac{\lpartial}{\partial x^{i_\ell}}$.
From (\ref{Fpi}) we get an isomorphism from the graded functions on $\E$ to the multi-vector fields on the body $E_0$:
\bb{BVchi}
\chi:C^\infty(\E)\overset{\cong}{\longrightarrow} \mathcal X^\bullet(E_0)\,.
\ee
It is then clear that $\chi$ - and hence $F^\pi$ - only depend on the choice of a projection $\pi:\E\rightarrow E_0$, since such a choice allows for an identification of $\E$ with $\Pi T^*E_0$ and, therefore, for an identification of graded functions on $\E$ with multi-vector fields on $E_0$.
The map $\chi$ allows us to identify multiplication with graded functions, $Q$, on the l.h.s. of (\ref{Fpi2}) with interior multiplication, $\iota_{\chi(Q)}$, on the r.h.s. of (\ref{Fpi2})
$$
Q\cdot\mapsto \iota_{\chi(Q)}\,.
$$
Furthermore, we may use it to define the so-called {\em Schouten-Nijenhuis bracket} on multi-vector fields:
\bb{BVSchouten}
[\chi(f),\chi(g)]:=-\chi(\{f,g\})\,.
\ee
If we define the {\em Lie derivative} for semidensities as
\begin{equation}\label{Liedef}
L_Q:=[[Q,\Delta]]=Q\Delta-(-1)^{|Q|}\Delta Q\,,
\end{equation}
we deduce from the well-known relations between the operators $d$, $\iota_\chi$ and $L_\chi$ acting on differential forms corresponding relations between the operators $\Delta$, $Q\cdot$ and $L_Q$ acting on semidensities, which we state in the following proposition.
\begin{proposition}\label{BVPropCR}
For all $Q, Q_1$ and $Q_2$ in $C^\infty(\E)$, the following identities hold:
\begin{enumerate}
\item $[[\Delta,\Delta]]=2\Delta^2=0$, \label{commrel1}
\item $[[Q_1,Q_2]]=0$, \label{commrel2}
\item $[[L_{Q_1},L_{Q_2}]]=-L_{\{Q_1,Q_2\}}$, \label{commrel3}
\item $[[L_{Q_1},Q_2]]=-\{Q_1,Q_2\}$ and \label{commrel4}
\item $[[\Delta,L_Q]]=0$. \label{commrel5}
\end{enumerate}
\end{proposition}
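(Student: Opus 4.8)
The plan is to transport every identity through the isomorphism of differential graded modules $F^\pi$ of (\ref{Fpi2}). By construction $F^\pi$ intertwines the $BV$-operator $\Delta$ with the de Rham differential $d$ on $\Omega(E_0)$, and, by the identification of multiplication operators with interior multiplications induced by $\chi$ (the map $Q\cdot\mapsto\iota_{\chi(Q)}$ discussed after (\ref{BVchi})), it also intertwines multiplication by a graded function $Q$ with interior multiplication by the multi-vector field $\chi(Q)$. Consequently the Lie derivative $L_Q=[[Q,\Delta]]$ of (\ref{Liedef}) is carried to $[[\iota_{\chi(Q)},d]]=L_{\chi(Q)}$, the ordinary Lie derivative along $\chi(Q)$, by (\ref{BVDefLie}). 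Since $F^\pi$ is a degree-preserving isomorphism, graded commutators are preserved: if $A$ corresponds to $A'$ and $B$ to $B'$, then $[[A,B]]$ corresponds to $[[A',B']]$. Thus each of the five identities reduces to its counterpart in the Cartan calculus on $\Omega(E_0)$.

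Two of them are then immediate. Identity \ref{commrel1} is $d^2=0$, i.e.\ (\ref{DGdsquare}). Identity \ref{commrel2} is the statement that interior multiplications commute, $[[\iota_{\chi(Q_1)},\iota_{\chi(Q_2)}]]=0$, which holds on $\Omega(E_0)$ by the extension of $[[\iota_{\lder{X}},\iota_{\lder{Y}}]]=0$ (following (\ref{DGIntMult2})) from vector fields to multi-vector fields.

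The crux is identity \ref{commrel4}. On $\Omega(E_0)$ one has $[[L_{\chi_1},\iota_{\chi_2}]]=\iota_{[\chi_1,\chi_2]}$, the extension of (\ref{DGLiota}) to multi-vector fields, where $[\cdot,\cdot]$ is the Schouten--Nijenhuis bracket. Applying this to $\chi_i=\chi(Q_i)$ and inserting the defining relation (\ref{BVSchouten}), $[\chi(Q_1),\chi(Q_2)]=-\chi(\{Q_1,Q_2\})$, yields $[[L_{\chi(Q_1)},\iota_{\chi(Q_2)}]]=-\iota_{\chi(\{Q_1,Q_2\})}$; transporting back through $F^\pi$ gives $[[L_{Q_1},Q_2]]=-\{Q_1,Q_2\}$. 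The remaining identities \ref{commrel3} and \ref{commrel5} can be obtained by the same transport, from the extension of (\ref{DGLL}) and from the de Rham fact $[[d,L_\chi]]=0$; alternatively, and more economically, they follow purely algebraically once \ref{commrel1} and \ref{commrel4} are in hand. Indeed, the graded Jacobi identity for operators together with $\Delta^2=0$ gives $[[\Delta,L_Q]]=[[\Delta,[[Q,\Delta]]]]=0$, which is \ref{commrel5}; and expanding $[[L_{Q_1},L_{Q_2}]]=[[L_{Q_1},[[Q_2,\Delta]]]]$ by the graded Jacobi identity and using \ref{commrel4} and \ref{commrel5} produces $-L_{\{Q_1,Q_2\}}$, which is \ref{commrel3}.

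The main obstacle is the sign bookkeeping. The relations (\ref{DGLL}) and (\ref{DGLiota}) are recorded only for single vector fields, so one must first establish their multi-vector field versions, with the Schouten--Nijenhuis bracket in place of the commutator, keeping careful track of the total-degree signs entering the graded commutators. One must then check that these signs mesh with the minus sign built into the definition (\ref{BVSchouten}) of the Schouten--Nijenhuis bracket, so that identity \ref{commrel4} emerges with exactly the sign claimed; once this is pinned down, every other sign in the proposition is forced.
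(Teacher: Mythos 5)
Your proposal is correct and follows essentially the same route as the paper, which likewise obtains these identities by transporting $\Delta$, $Q\cdot$ and $L_Q$ through $F^\pi$ to $d$, $\iota_{\chi(Q)}$ and $L_{\chi(Q)}$ and invoking the standard Cartan relations (the paper gives no further detail). Your extra care about the multi-vector-field/Schouten versions of (\ref{DGLL}) and (\ref{DGLiota}) and the purely algebraic derivation of items \ref{commrel3} and \ref{commrel5} from \ref{commrel1} and \ref{commrel4} are sound refinements, not deviations.
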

\subsection{BV--Integrals}
Thanks to the isomorphism (\ref{Fpi2}) there is a natural pairing between semidensities and smooth submanifolds $L\subseteq E_0$, once a projection $\pi:\E\rightarrow E_0$ is chosen. In other words, there is a natural pairing between semidensities $[s]$ and pairs $(L,\pi)$:
\bb{BVint0}
\Int{(L,\pi)}[s]:=\Int{L}F^\pi[s]\,.
\ee
Each pair $(L,\pi)$ defines a Lagrangian submanifold, $\L\subset \E$, whose body is $L$ and such that $\pi$ is adapted to $\L$. We show that the r.h.s. of (\ref{BVint0}) only depends on the Lagrangian submanifold $\L$ and not on the particular choice of $\pi$.
Consider an infinitesimal isomorphism of the odd symplectic manifold $(\E,\omega)$ that preserves $\L$ (but not necessarily $\pi$).
Inserting a partition of unity under the integral (\ref{BVint0}) we may work locally and restrict to Hamiltonian variations $\delta_Q$.
Let $\L$ be defined by a sheaf of ideals $I$. Then
$$
\delta_Q\L=0\quad\Leftrightarrow\quad Q\in I\,.
$$
Under such a transformation, the semidensity transforms as $\delta_Q[s]=[L_{\lder{X}_Q}s]$, where $\lder{X}_Q$ is the Hamiltonian vector field associated with $Q$, as defined in (\ref{DGHamVF}), and $L_{\lder{X}_Q}$ denotes the Lie derivative (\ref{BVDefLie}). Since $\omega$ is invariant under Hamiltonian flows, $[L_{\lder{X}_Q}s]$ is well-defined. We need to show that, for $Q\in I$,
\bb{BVint1}
\Int{(L,\pi)}[L_{\lder{X}_Q}s]=0\,,
\ee
which follows from the following lemma.
\begin{lemma}\label{BVLemmaLieder}
For $Q \in C^\infty(\E)$ and $[s]\in\S(\E)$, the Lie-derivative $L_Q$ defined in (\ref{Liedef}) satisfies the equation
\begin{equation*}
L_Q [s] = [L_{\lder X_Q}s].
\end{equation*}
\end{lemma}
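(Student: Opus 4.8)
The plan is to reduce to local Darboux coordinates and to compare the coefficients of the volume form $\Omega$ on the two sides. Since $L_Q=[[Q,\Delta]]$ (see (\ref{Liedef})), the $BV$-operator $\Delta$, and the Lie derivative $L_{\lder{X}_Q}$ are all local, a partition of unity lets me work on a single Darboux chart $\{x^i,x^\dag_i\}$ with $\Omega=dx^1\wedge\dots\wedge dx^m$; by the normal form (\ref{BVnormalform}) I may assume $[s]$ is represented by $s=f\Omega$ with $f\in C^\infty(\E)$. The right-hand side is well defined because $L_{\lder{X}_Q}$ commutes with $\omega\wedge$: indeed $L_{\lder{X}_Q}\omega=0$, so $\omega\wedge(L_{\lder{X}_Q}s)=\pm L_{\lder{X}_Q}(\omega\wedge s)=0$, i.e. $L_{\lder{X}_Q}s$ is again $\omega\wedge$-closed.

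The first key step is to identify the normal form of the class $[L_{\lder{X}_Q}s]$. For any $\omega\wedge$-closed form $\eta$, the second defining relation of the contraction (\ref{BVBVcontraction}) gives $\iota'p(\eta)-\eta=h(\omega\wedge\eta)+(\omega\wedge)h(\eta)=(\omega\wedge)h(\eta)$, which is $\omega\wedge$-exact, so $[\eta]=[p(\eta)]$, where $p$ projects onto the $k=0$ part, i.e. onto the multiple of $\Omega$. Hence computing $[L_{\lder{X}_Q}s]$ amounts to extracting the coefficient of $\Omega$ in $L_{\lder{X}_Q}(f\Omega)$, and the lemma reduces to a comparison of two functions multiplying $[\Omega]$.

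For the left-hand side I would expand, using (\ref{Liedef}) together with $\Delta[g\Omega]=(\Delta_\Omega g)[\Omega]$,
\[
L_Q[f\Omega]=\big(Q\,\Delta_\Omega f-(-1)^{|Q|}\Delta_\Omega(Qf)\big)[\Omega],
\]
and then apply the second-order relation (\ref{BVLemmaDelta1}) to $\Delta_\Omega(Qf)$. The two $Q\,\Delta_\Omega f$ contributions cancel, leaving a term proportional to $\{Q,f\}$ and a term proportional to $(\Delta_\Omega Q)f$. For the right-hand side I would use the graded Leibniz rule (\ref{DGdLeibniz}) and Cartan's formula (\ref{BVDefLie}) to write $L_{\lder{X}_Q}(f\Omega)=(\lder{X}_Q f)\,\Omega\pm f\,L_{\lder{X}_Q}\Omega$, where $\lder{X}_Q f=\{Q,f\}$ by (\ref{DGPoisson}); this reproduces the same Poisson-bracket term. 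Since $d\Omega=0$, Cartan's formula also gives $L_{\lder{X}_Q}\Omega=-(-1)^{|Q|}d\,\iota_{\lder{X}_Q}\Omega$, whose $\Omega$-component is exactly the divergence of $\lder{X}_Q$ with respect to $\Omega$.

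The main obstacle, and the heart of the matter, is the identification of this divergence with the $BV$-Laplacian of $Q$. By (\ref{DGHamVF}), (\ref{DGPoisson}) the $\lpartial/\partial x^i$-components of $\lder{X}_Q$ are $\{Q,x^i\}=\mp\,\dfrac{Q\rpartial}{\partial x^\dag_i}$, so the coefficient of $\Omega$ in $d\,\iota_{\lder{X}_Q}\Omega$ is $\sum_i\dfrac{\lpartial}{\partial x^i}\{Q,x^i\}=\mp\,\dfrac{\lpartial^2 Q}{\partial x^i\partial x^\dag_i}$, which by (\ref{BVDelta0}) equals $\pm\Delta_\Omega Q$. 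Feeding this back shows that the $(\Delta_\Omega Q)f$-terms on the two sides agree as well, completing the proof. The only real work is keeping the graded signs consistent across (\ref{DGHamVF}), (\ref{DGPoisson}), (\ref{BVLemmaDelta1}) and (\ref{BVDelta0}); once a single convention is fixed, the cancellations are forced and the identity $L_Q[s]=[L_{\lder{X}_Q}s]$ follows. (Alternatively, one could transport everything to the body via the isomorphism (\ref{Fpi2}), under which $\Delta\leftrightarrow d$ and $Q\cdot\leftrightarrow\iota_{\chi(Q)}$, so that $L_Q$ becomes the Cartan Lie derivative $[[\iota_{\chi(Q)},d]]$ on $\Omega(E_0)$; but this route requires the same divergence computation in disguise.)
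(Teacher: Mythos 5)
Your argument is correct in outline but takes a genuinely different route from the paper's. The paper never passes to coordinates: it writes $ds=\omega\wedge t$, uses $\Delta[s]=-[dt]=-[d(\omega\wedge)^{-1}ds]$, and massages both sides of the claimed identity into the common expression $[(-1)^{|Q|}d(\omega\wedge)^{-1}(dQ)\wedge s+(-1)^{|Q|}(dQ)\wedge t]$, the key step being the observation that $\iota_{\lder X_Q}(\omega\wedge s)=0$ forces $(\omega\wedge)^{-1}\bigl((dQ)\wedge s\bigr)=-\iota_{\lder X_Q}s$; in particular it needs neither (\ref{BVLemmaDelta1}) nor the local formula (\ref{BVDelta0}). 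You instead reduce to the normal form $f\Omega$ and match the two sides as first-order operators in $f$: the principal part $\{Q,\cdot\}$ comes from (\ref{BVLemmaDelta1}) on the left and from $\lder X_Q(f)=\{Q,f\}$ on the right, and the zeroth-order parts agree because the divergence of a Hamiltonian vector field with respect to $\Omega$ is (up to sign) $\Delta_{\Omega}Q$. That is a sound and more concrete argument, and it makes the ``modular'' interpretation of $\Delta_{\Omega}$ explicit; its cost is that the one step you defer is precisely where the content lies. The prefactors you write as $\pm$ are not global sign conventions but degree-dependent factors such as $(-1)^{|Q|}$ and $(-1)^{|f|(|Q|+1)}$ (from the graded Leibniz rule for $L_{\lder X_Q}$, from (\ref{BVLemmaDelta1}), and from converting the right derivative $Q\rpartial/\partial x^\dag_i$ in $\{Q,x^i\}$ into the left derivatives of (\ref{BVDelta0})), and these must be tracked through to the end and shown to coincide on both sides; until that is written out once, your proof is a correct skeleton rather than a complete verification. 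The paper's cohomological route avoids this bookkeeping entirely, which is its principal advantage; yours has the advantage of exhibiting the answer explicitly in Darboux coordinates.
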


\begin{proof}
Consider a semidensity $[s]$ with representative $s\in\Omega(\E)$. Then, as we have seen in (\ref{ds}) and (\ref{BVBVOp1}), $ds=\omega\wedge t$, for some $t\in\Omega(\E)$, and $\Delta[s]=-[dt].$
By (\ref{BVDefLie}) and (\ref{DGHamVF})
\begin{eqnarray*}
[L_{\lder X_Q}s]
& = & [\iota_{\lder X_Q} ds - (-1)^{|Q|}d \iota_{\lder X_Q}s] \\
& = & [\iota_{\lder X_Q} (\omega \wedge t) - (-1)^{|Q|}d \iota_{\lder X_Q}s] \\
& = & [(-1)^{|Q|} (dQ) \wedge t - (-1)^{|Q|}d \iota_{\lder X_Q}s].
\end{eqnarray*}
On the other hand, by (\ref{Liedef}), we find that
\begin{eqnarray*}
L_Q [s]
& = & (Q\Delta -(-1)^{|Q|}\Delta Q)[s] \\
& = & [-Q dt + (-1)^{|Q|} d(\omega \wedge)^{-1}d(Qs) ] \\
& = & [-Q dt + (-1)^{|Q|} d(\omega \wedge)^{-1}(dQ) \wedge s + 
           d(\omega \wedge)^{-1}Q \omega \wedge t] \\
& = & [-Q dt + (-1)^{|Q|} d(\omega \wedge)^{-1}(dQ) \wedge s + (-1)^{|Q|}d(Qt)]\\
& = & [(-1)^{|Q|} d(\omega \wedge)^{-1}(dQ) \wedge s +(-1)^{|Q|}(dQ)\wedge t]\,,
\end{eqnarray*}
and the lemma follows by noticing that
$$
0=\iota_{\lder{X}_Q}(\omega\wedge s)=(-1)^{|Q|}(dQ)\wedge s+(-1)^{|Q|}\omega\wedge\iota_{\lder{X}_Q}s
$$
and hence that
$$
(\omega\wedge)^{-1}(dQ)\wedge s=-\iota_{\lder X_Q}s\,.
$$
\end{proof}
Lemma \ref{BVLemmaLieder} implies
\bb{BVintvar}
\Int{(L,\pi)}[L_{\lder X_Q}s] = \Int{(L,\pi)}Q\Delta[s]-(-1)^{|Q|}\Int{(L,\pi)}\Delta Q[s]=\Int{(L,\pi)}Q\Delta[s]\,,
\ee
where the last eq. follows from (\ref{Fpi2}), (\ref{BVint0}) and the standard Stokes theorem.
Claim (\ref{BVint1}) follows from the observation that, for any semidensity $[s]$,
$$
Q\in I\quad\Rightarrow\quad \Int{(L,\pi)}Q[s]=0\,,
$$
if $I$ defines the submanifold with body $L$ to which $\pi$ is adapted.
Moreover, if $\Delta[s]=0$, we deduce from (\ref{BVintvar}) that the integral (\ref{BVint0}) is preserved under {\em any} Hamiltonian variation of $\L$.
Hence we have proven the following result. 
\begin{theorem}\label{BVThm}
Let $[s]\in\S(\E)$ be a semidensity and $\L\subset\E$ a Lagrangian submanifold with body $L$ (without boundary). Then the following integral is well defined
\begin{equation}\label{BVintegral2}
\int_{\L}[s] := \int_LF^{\pi_\L}[ s]\,,\quad \text{$\pi_\L$ adapted to $\L$}\,.
\end{equation}
Furthermore, if $\Delta[s]=0$ then the integral (\ref{BVintegral2}) is preserved under Hamiltonian variations of $\L$.
\end{theorem}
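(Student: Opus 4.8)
The statement collects the results of the preceding subsection, so the plan is to assemble those into a proof of its two assertions. For the first (well-definedness), what needs to be shown is that $\int_L F^{\pi_\L}[s]$ is independent of the choice of adapted projection; existence of at least one such projection is guaranteed by Lemma~\ref{BVLadapted}. Following the framing set up just before the theorem, I would realize a change of adapted projection (with $\L$ held fixed) through an infinitesimal isomorphism of $(\E,\omega)$ that preserves $\L$. Inserting a partition of unity under the integral~(\ref{BVint0}) lets me work locally and reduce to infinitesimal Hamiltonian variations $\delta_Q$; preservation of $\L$, defined by the sheaf of ideals $I$, forces the generator to satisfy $Q\in I$. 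By invariance of the pairing under symplectomorphisms the first-order change of the integral is $\int_{(L,\pi)}\delta_Q[s]=\int_{(L,\pi)}[L_{\lder{X}_Q}s]$, so it suffices to prove that this vanishes for every $Q\in I$.

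For that vanishing I would invoke the machinery already in place. By Lemma~\ref{BVLemmaLieder}, $[L_{\lder{X}_Q}s]=L_Q[s]=Q\Delta[s]-(-1)^{|Q|}\Delta(Q[s])$. Integrating over $(L,\pi)$ and passing through the isomorphism $F^\pi$ of~(\ref{Fpi2}), which intertwines $\Delta$ with the de~Rham differential on $E_0$, the term $\int_{(L,\pi)}\Delta(Q[s])=\int_L d\,F^\pi(Q[s])$ drops out by Stokes because $L$ has no boundary; this is precisely identity~(\ref{BVintvar}) and leaves $\int_{(L,\pi)}[L_{\lder{X}_Q}s]=\int_{(L,\pi)}Q\Delta[s]$. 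Finally, the general fact that $\int_{(L,\pi)}Q[s']=0$ for every semidensity $[s']$ whenever $Q\in I$ and $\pi$ is adapted to the body $L$ cut out by $I$, applied to $[s']=\Delta[s]$, annihilates the right-hand side. Thus each infinitesimal change of adapted projection preserves the integral, and integrating along a path of adapted projections joining any two of them yields well-definedness.

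The second assertion then follows from the same computation with no extra work. For an \emph{arbitrary} Hamiltonian variation $\delta_Q$ of $\L$, identity~(\ref{BVintvar}) shows that the integral changes at first order by $\int_{(L,\pi)}Q\Delta[s]$; if $\Delta[s]=0$ this vanishes for every $Q$, so the integral is invariant under all Hamiltonian variations of $\L$.

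I expect the real obstacle to sit in the first paragraph rather than in the computational core. One must argue carefully that any two projections adapted to a fixed $\L$ can be joined through adapted projections, that the connecting deformation is generated by symplectomorphisms preserving $\L$ (hence infinitesimally by Hamiltonians $Q\in I$), and that the reduction of the global integral to the local infinitesimal identity via a partition of unity is compatible with the adapted Darboux coordinates furnished by Lemma~\ref{BVLadapted}. The remaining steps — Lemma~\ref{BVLemmaLieder}, the Stokes cancellation through $F^\pi$, and the vanishing $\int_{(L,\pi)}Q[s']=0$ for $Q\in I$ — are already established and need only be invoked in the correct order.
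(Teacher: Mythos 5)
Your proposal is correct and follows essentially the same route as the paper: reduce the change of adapted projection to infinitesimal Hamiltonian variations generated by $Q\in I$ via a partition of unity, apply Lemma~\ref{BVLemmaLieder} and the Stokes cancellation of identity~(\ref{BVintvar}) to reduce everything to $\int_{(L,\pi)}Q\Delta[s]$, and kill this term either because $Q\in I$ (well-definedness) or because $\Delta[s]=0$ (invariance under arbitrary Hamiltonian variations). The connectivity and integration-along-a-path concerns you raise at the end are real but are left equally implicit in the paper's own argument.
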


We conclude this mathematical section with a remark on {\em Berezin integration}.
Let $\M$ be a graded manifold.
W.r.t. local (even/odd) coordinate functions, $\{x^i,\beta^\alpha\}$ ($i=1,\dots,m$, $\alpha=1,\dots,n$), we define a so-called {\em Berezinian}
\bb{Berezinian}
\Omega_B:=f(x,\beta)d\beta^1\wedge\dots\wedge d\beta^n\wedge dx^1\wedge\dots\wedge dx^m \,,
\ee
where $f$ may be expressed in local coordinates as
$$
f(x,\beta)=f_0(x)+f_\alpha(x)\beta^\alpha+\dots +f_{top}(x)\beta^n\dots\beta^1\,.
$$
In the literature, the Berezin-integral of $\Omega_B$ over $\M$ is then defined to be
\bb{BerezinInt}
\Int{\M}\Omega_B:=\Int{M_0}f_{top}(x)dx^1\wedge\dots\wedge dx^m\,,
\ee
where $M_0$ is the body of $\M$.
There is, however, no natural (coordinate-independent) way of characterizing differential forms of the kind (\ref{Berezinian}) on a graded manifold without introducing additional structure.
A natural way of defining "Berezinians" on a graded manifold $\M$ is via semidensities on $\Pi T^*\M$. Then (\ref{Berezinian}) should be replaced by the semidensity
$$
f(x,\beta)[d\beta^\dag_1\wedge\dots\wedge d\beta^\dag_n\wedge dx^1\wedge\dots\wedge dx^m]\,
$$
and (\ref{BerezinInt}) by the $BV$-integral
$$
\Int{\M\subset\Pi T^*\M}f(x,\beta)[d\beta^\dag_1\wedge\dots\wedge d\beta^\dag_n\wedge dx^1\wedge\dots\wedge dx^m]\,,
$$
which coincides with the r.h.s. of (\ref{BerezinInt}) due to (\ref{Fpi}) and (\ref{BVint0}).
%%%%%%%%%%%%%%%%%%%%%%%%%%%%%%%%%%%%%%%%%%%%%%%%%%%%%%%%%%%%%%%%%%%%%%%%%%%%%%%%%%%%%%%

\section{$BV$-quantization}
In this section, we give a general description of how to solve the problem stated in the Introduction using the formalism introduced in Section 2:
For a given Lagrangian system with symmetry, $(M,S_0,P)$, we describe how to construct the $BV$-space, $\E$, (see (\ref{IntroBVspace})) as well as a solution of the $QME$ (\ref{IntroQME}), for a given pull-back, $\Omega$, of a measure on the body of $\E$, such that condition (\ref{Introcond1}) is satisfied. 

The idea is to first extend the original space of fields, $M$, by ghosts - as dictated by the symmetry $P$ - to a graded manifold $\M$ and to pair each field and ghost with an anti-field so as to obtain the odd cotangent bundle\footnote{Note that $M$ is {\em not} the body of $\E$.}
\begin{align}
\Pi T^*&\M=\E\nonumber \\
&\downarrow \pi_0\nonumber \\
&M\,\nonumber
\end{align}
and, second, to find a suitable $\Delta$-closed semidensity, $[s]$, on $\E$, in order to define an expectation, $<f>$, for functions $f\in C^\infty(M)$, by
\bb{BVmeasure}
\langle f\rangle:=Z^{-1}\Int{\L}\pi_0^*f\cdot [s]\,,\quad\text{with}\quad Z:=\Int{\L}[s]\,,
\ee
for a suitable Lagrangian submanifold $\L\subset \E$.
The choice of $\L$ encodes the choice of a gauge.
The semidensity $[s]$ is chosen to be of the normal form
\bb{BVsemidensity}
[s]=\left [ e^{\frac{i}{\hbar}S}\Omega\right]\,,
\ee
where $\Omega$ is the pull-back (w.r.t. some projection) of a measure, $\hat\Omega$, on the body of $\E$, and $S$ is an extension of $S_0$ by ghost- and higher $\hbar$-terms:
\bb{BVS}
S=S_0+\hbar\times \text{ghost terms}+\mathcal O(\hbar ^2)\,.
\ee
The condition $\Delta[s]=0$ is equivalent to
\bb{BVQME}
\frac{1}{2}\{S,S\}-i\hbar\Delta_{\Omega}S=0\,,
\ee
which is the $QME$ (\ref{IntroQME}).
According to Theorem \ref{BVThm}, the first integral in (\ref{BVmeasure}) is {\em gauge-invariant} (i.e., invariant under Hamiltonian variations of $\L$) if 
\bb{BVdeltaDelta}
\Delta (\pi_0^*f[s])=0\quad\Leftrightarrow\quad \delta_{BV} (\pi_0^*f)=0\,,
\ee
where
\bb{BVdeltaBV}
\delta_{BV} :=\{S,.\}-i\hbar \Delta_{\Omega}\,.
\ee
The $QME$ implies that $\delta_{BV}^2=0$, and hence (\ref{BVmeasure}) defines a measure on the cohomology classes of $\delta_{BV}$.
Note that the $\delta_{BV}$-cohomology is isomorphic to the $\Delta$-cohomology and hence to the deRham cohomology of the body of $\E$. Since the body of $\E$ is a vector bundle over $M$, the $\delta_{BV}$-cohomology is nothing but the {\em deRham cohomology of $M$}.

The information encoded in the action $S_0$, the symmetry $P$ and the measure $\hat\Omega$ on the body enables us to split $\delta_{BV}$ according to (\ref{BVdeltaBV}) and (\ref{BVS}) and to reinterpret the $\delta_{BV}$-cohomology by using Homological Perturbation Theory ($HPT$):
Under a regularity condition (see Sect. 3.1), the cohomology of the differential $\delta_0=\{S_0,.\}$ on $C^\infty(\E)$, given by the first term in (\ref{BVS}), coincides with the {\em restriction of graded functions to the shell}.
A solution of the classical master equation ($CME$) $\{S,S\}=0$, given by adding the ghost terms in (\ref{BVS}) to $S_0$, can be seen as a perturbation of $\delta_0$ that induces the {\em $BRST$-differential on the shell}.
We will see that the $BRST$-cohomology at ghost degree zero contains the {\em classical observables}, i.e., the gauge-invariant functions on the shell, and that the $BRST$-cohomology at ghost degree one contains the {\em anomalies}, i.e., obstructions to solving the $QME$.

Perturbation of the $BRST$-cohomology by $\Delta_\Omega$ in (\ref{BVdeltaBV}) then enables us to construct an {\em invariant effective measure on the shell} that contains the off-shell contributions to (\ref{BVmeasure}) as a formal power series in $\hbar$. In general, this power series in $\hbar$ does not converge.
If it terminates, we say that the integral {\em localizes} on the shell. As an example, we will rederive Duistermaat-Heckman localization in the $BV$-formalism.
If the measure $\hat\Omega$ is not invariant under the symmetry, we need to add higher order $\hbar$-terms to $S$ - as indicated in (\ref{BVS}) - in order to obtain a solution of the $QME$.
We will find obstructions to solving the $QME$ (anomalies) as well as obstructions to solving the $CME$ for the case of an open symmetry in the $BRST$-cohomology at ghost degree one.

\subsection{The shell}
We start with a discussion of the cohomology generated by the first term, $S_0$, in (\ref{BVS}).
The ghosts being spectators, we may restrict our attention to the space $\E_0:=\Pi T^*M$, where the fibres carry ghost degree one, and consider the differential
\bb{BVdelta0}
\delta_0:=\{S_0,.\}\,
\ee
thereon.
%If we assign to the fibres of $\Pi T^*M$ {\em anti-ghost degree} one, the differential (\ref{BVdelta0}) decreases the anti-ghost degree by one.
The set of critical points of $S_0$ is denoted by $\Sigma$ and is called the {\em shell}.
We impose the following {\em regularity condition} on $S_0$:
On the shell, the Hessian of $S_0$ is supposed to be of constant rank $m-k$, where $m$ is the dimension of $M$ and $k$ is the dimension of the shell:
\bb{BVRegularityCond}
rk(S_{0,ij})_{\arrowvert\Sigma}=m-k\,.
\ee
This is to say that any function $f\in C^\infty(M)$ that vanishes on-shell can be written as a linear combination of equations of motion $S_{0,i}$ (with coefficients in $C^\infty(M)$), which, in turn, amounts to saying that $f$ is $\delta_0$-exact:
$$
f=\Sum{i}S_{0,i}g^i=\{S_0,x^\dag_ig^i\}\,.
$$
Hence, at ghost degree $0$ the cohomology of $\delta_0$ is the restriction of graded functions to the shell.
At negative ghost degree $\delta_0(f)=0$ implies that the multi-vector field associated with $f$ is tangential to the shell. Hence,
\bb{BVdelta0coho}
H(C^\infty(\E_0),\delta_0)=C^\infty(\Pi T^*\Sigma)\,.
\ee

\subsection{Closed irreducible symmetries}
Next, we introduce ghosts, as dictated by the symmetry $P$, and perturb the $\delta_0$-cohomology by adding ghost terms to $S_0$.
In this subsection we consider {\em closed symmetries}, i.e., symmetries where $E=0$ in (\ref{IntroP2}). Due to the assumption that $P$ is finitely generated, there is a vector bundle $E_1$ and a bundle map $\rho_1$
\bb{BVghostbundle}
\xymatrix{
E_1 \ar[rr]^{\rho_1} \ar@{->>}[dr] && TM \ar@{->>}[dl] \\
& M &}
\ee
such that
$$
\rho_1(\Gamma(E_1))=P\,.
$$
We call the bundle $E_1$ a {\em ghost bundle}.
Assigning ghost degree $-1$ to the fibres of $E_1$ yields the graded vector bundle denoted by $E_1[-1]$.
We denote the associated graded manifold by $\M$. According to (\ref{gradedmfd}), local graded functions on $\M$ are generated by local sections, $\beta^\alpha$, of $E_1^*[1]$ - the {\em ghosts} that carry degree one. 
The symmetry $P$ is said to be {\em irreducible} if there is a ghost bundle (\ref{BVghostbundle}) such that $\rho_1:\Gamma(E_1)\rightarrow P$ is injective.
Then the tensor $T$ in (\ref{IntroP2}) equips $E_1$ with the structure of a {\em Lie algebroid}.
Pairing each field (coordinate function on $M$) and ghost with an anti-field extends $\M$ to the graded manifold $\E:=\Pi T^*\M$, which is associated with the graded vector bundle $T^*[1]M\oplus E_1[-1]\oplus E_1^*[2]$.
Graded functions on $\E$ are then generated, locally, by the coordinates $\{x^\dag_i,\beta^\alpha,\beta^\dag_\alpha\}$ of degrees $-1$, $1$ and $-2$, respectively, with coefficients in $C^\infty(M)$.
Then the natural odd symplectic structure on $\E$, $\omega=dx^i\wedge dx^\dag_i+d\beta^\alpha\wedge d\beta^\dag_\alpha$, has ghost degree $-1$.
The body of $\E$ is $E_1^*[2]$. From the theory of Lie algebroids we know that the dual of a Lie algebroid is equipped with a natural Poisson bracket, i.e., with a two-vector field $P_1\in\Gamma\Lambda^2TE_1^*$ with the property that $[P_1,P_1]=0$, where $[.,.]$ denotes the Schouten-bracket.
Due to the isomorphism (\ref{BVchi}) and eq. (\ref{BVSchouten}), $P_1$ corresponds to a graded function, $S_1$, on $\E$ with $\{S_1,S_1\}=0$.
The section of $S_\bbR(T[-1]M\oplus E_1^*[1]\oplus E_1[-2])$ associated with $S_1$ is given by the sum of the two terms $\rho_1\in E_1^*\otimes TM$ and $-\rho_1^{-1}[\rho_1\wedge\rho_1]/2\in E_1^*\wedge E_1^*\otimes E_1$. What this means becomes clear in local coordinates, in which the graded function $S_1$ reads
\bb{BVS1}
S_1=x^\dag_i\rho_{1\alpha}^i(x)\beta^\alpha-\beta^\dag_\alpha T^\alpha_{\beta\gamma}(x)\beta^\beta\beta^\gamma\,,\quad
T^\alpha_{\beta\gamma}(x):=(\rho^{-1}_1)^\alpha_{i}(x)\rho^j_{1\beta}(x)\rho^i_{1\gamma,j}(x)\,.
\ee
The restriction of the differential 
\bb{BVdelta1}
\delta_1:=\{S_1,.\}
\ee
to graded functions on $\M$ is usually called $BRST$-differential.
If the symmetry arises from a Lie group acting on $M$, the $BRST$-cohomology coincides with the corresponding Lie algebra cohomology with coefficients in $C^\infty(M)$.
In the open case (see Sect. 3.4), $\delta_1$ is not a differential but still a perturbation of $\delta_0$, i.e., the $BRST$-differential is only defined on the shell.
The $BRST$-differential on the shell contains interesting physical information: the {\em classical observables} (gauge-invariant functions on the shell) at ghost degree zero and the {\em anomalies} at ghost degree $1$ (see Subsect. \ref{Anomalies}).

\subsection{Closed reducible symmetries}
In the {\em reducible} case, replacing $\rho_1^{-1}$ in (\ref{BVS1}) by an arbitrary injection
\bb{BVrediota}
\iota:\, P\hookrightarrow \Gamma(E_1)\,,\quad\text{with}\quad \rho_1\circ\iota =id\,,
\ee
we again obtain a coboundary operator $\delta_1$, which produces, however, extra cycles, corresponding to $\delta_1$-closed graded functions that are associated with one-forms $\eta\in\Gamma E_1^*$ that vanish on the image of $\iota$. (If the one-form associated with the graded function $\delta_1(f)$, for any $f\in C^\infty(M)$, were to vanish on the image of $\iota$, $\delta_1(f)$ would have to vanish as well.)
In order to kill these cycles, we need to introduce second-order ghosts. Iterating this process, we end up with a sequence of ghost bundles
\bb{BVhoghostbundle}
\xymatrix{
\dots\,E_2 \ar[r]^{\rho_2} \ar@{->>}[dr] &E_1\ar[r]^{\rho_1} \ar@{->>}[d]& TM \ar@{->>}[dl] \\
& M &}
\ee
such that
$$
\rho_i(\Gamma(E_i))=\operatorname{Ker}(\rho_{i-1})\,.
$$
The graded bundle $E_1[-1]\oplus E_2[-2]\oplus\dots$ defines a graded manifold $\M$. We then find a natural graded function $S_1$, with $\{S_1,S_1\}=0$, on $\E=\Pi T^*\M$. Detailed descriptions of how to construct $S_1$ in the presence of higher order ghosts can be found, e.g., in \cite{Gomis}.

\subsection{Open symmetries}\label{BVopen}
In this subsection, we consider an {\em open} gauge algebra, i.e., a symmetry $P$ with $E\neq 0$ in (\ref{IntroP2}).
As usual, we write $\delta_0:=\{S_0,.\}$ and choose a contraction
\bb{BVcontraction}
\xymatrix@1{
(H(C^\infty(\E)),0)\ar^{\iota\quad}[r] & (C^\infty(\E),\delta_0)\quad\quad\quad\quad\ar@<2pt>@{->>}^{p\quad}[l]\ar@(dr,ur) &\hspace{-3cm}h\,,
}
\ee
where $h$ satisfies
\bb{BVOpenHomotopy}
h\delta_0+\delta_0h=\iota p-id\,
\ee
as well as the side conditions
\bb{BVopensidecond}
h^2=h\circ\iota=p\circ h=0\,.
\ee

The data (\ref{IntroP2}) yield an even function $S_1\in C^\infty(\E)$ with the properties
\bb{BVopenS1}
\delta_0 S_1=0\,,\quad \{S_1, S_1\}=\delta_0\text{-exact}\,.
\ee
We attempt to find a solution of the $CME$ $\{S,S\}=0$ of the form
$$
S=\Sum{m\geq 0}S_m\,.
$$
The $CME$ can be written in the form of the Maurer-Cartan equation
\bb{BVopenMC}
\delta_0\left(\Sum{m\geq 1}S_m\right)+\frac{1}{2}\left\{\Sum{m\geq 1}S_m,\Sum{m\geq 1}S_m\right\}=0\,.
\ee
Following ideas of Kontsevich \cite{Kontsevich}, we want to treat $\{.,.\}$ as a perturbation of $\delta_0$.
Therefore, we need to extend the contraction (\ref{BVcontraction}) to a contraction of graded anti-commutative algebras as defined in (\ref{DGasymmalg}):
\bb{BVcontraction1}
\xymatrix@1{
(\bigwedge_\bbR(H(C^\infty(\E)))[1],0)\ar^{\iota\quad\quad}[r] & (\bigwedge_\bbR(C^\infty(\E))[1],\delta_0)\quad\quad\quad\quad\quad\quad\quad\ar@<2pt>@{->>}^{p\quad\quad}[l]\ar@(dr,ur) &\hspace{-5cm}h\,.
}
\ee
We use the symbol $\wedge$ for the associative products in $\bigwedge_\bbR(C^\infty(\E))[1]$ and $\bigwedge_\bbR(H(C^\infty(\E)))[1]$.
The $[1]$ indicates that the anti-commutative product is taken w.r.t. the ghost gradings in $C^\infty(\E)$ and $H(C^\infty(\E))$ shifted by one, i.e.,
\bb{BVWedge}
f\wedge g=-(-1)^{(|f|+1)(|g|+1)}g\wedge f\,.
\ee
In order for (\ref{BVcontraction1}) to be a contraction, we extend the definitions of $\iota$, $p$, $\delta_0$ and $h$ as follows:
\begin{align*}
\iota(f_1\wedge\dots\wedge f_k)&=\iota(f_1)\wedge\dots\wedge \iota(f_k)\,,\\
p(f_1\wedge\dots\wedge f_k)&=p(f_1)\wedge\dots\wedge p(f_k)\,,\\
\delta_0(f_1\wedge\dots\wedge f_k)&=\sum_{i=1}^k(-1)^{|f_1|+\dots+|f_{i-1}|+i+k}f_1\wedge\dots\wedge (\delta_0f_i)\wedge\dots\wedge f_k\,,\\
h(f_1\wedge\dots\wedge f_k)&=\sum_{i=1}^k(-1)^{|f_1|+\dots+|f_{i-1}|+i+k}f_1\wedge\dots\wedge (hf_i)\wedge \iota p(f_{i+1})\wedge\dots\wedge\iota p(f_k)\,.
\end{align*}
Eqs (\ref{BVOpenHomotopy}) and (\ref{BVopensidecond}) then hold on $\bigwedge_\bbR (C^\infty(\E))[1]$, e.g.,
\begin{align*}
(h\delta_0+\delta_0h)(f_1\wedge\dots\wedge f_k)
&=\sum\limits_{i=1}^kf_1\wedge\dots\wedge(h\delta_0+\delta_0h)f_i\wedge\iota pf_{i+1}\wedge\dots\wedge\iota pf_k\\
&=(\iota p-\id)(f_1\wedge\dots\wedge f_k)\,,
\end{align*}
where the first equation is derived from the fact that $h$ and $\delta_0$ both change the ghost degree by one, and, in the second equation, (\ref{BVOpenHomotopy}) on $C^\infty(\E)$ has been used.
The contraction (\ref{BVcontraction1}) is an example of a contraction of {\em co-algebras}, as studied by Gugenheim, Lambe and Stasheff in \cite{GLS}.
Now, the bracket 
$$
\{.,.\}:\,C^\infty(\E)\wedge C^\infty(\E)\longrightarrow C^\infty(\E)
$$
can be considered as a perturbation of $\delta_0$ on $\bigwedge_\bbR (C^\infty(\E))[1]$.
Therefore, we define
$$
\{.,.\}(f_1\wedge\dots\wedge f_k)=\Sum{\sigma}(-1)^{\sigma+k} \{f_{i_1},f_{i_2}\}\wedge f_{i_3}\wedge\dots\wedge f_{i_k}\,,
$$
where the sum ranges over all $(2,k-2)$ shuffles, i.e., over all permutations of $k$ letters that leave the order of the first two and the order of the last $k-2$ letters unchanged. The symbol $(-1)^\sigma$ denotes the sign that is associated with the permutation $\sigma$ according to (\ref{BVWedge}).
For a triple, e.g., we get the formula
\begin{multline*}
\{.,.\}(f_1\wedge f_2\wedge f_3)=\\
-\{f_1,f_2\}\wedge f_3-(-1)^{(|f_2|+1)(|f_3|+1)+1}\{f_1,f_3\}\wedge f_2 - (-1)^{(|f_1|+1)(|f_2|+|f_3|)}\{f_2,f_3\}\wedge f_1\,.
\end{multline*}
The fact that $\{.,.\}$ is a perturbation of $\delta_0$ follows from the graded Leibniz- and Jacobi- identities of the bracket, which may be written in terms of operators on $\bigwedge_\bbR (C^\infty(\E))[1]$ as
$$
\delta_0\{.,.\}+\{.,.\}\delta_0=\{.,.\}^2=0\,.
$$
Note that the sign $(-1)^k$ in the definition of $\delta_0$ is needed for $\delta_0$ and $\{.,.\}$ to {\em anti-}commute.

Applying the Perturbation Lemma yields a co-boundary operator, $D$, on $\bigwedge_\bbR (H(C^\infty(\E)))[1]$ of the form
\bb{BVPLD}
D=\Sum{n\geq 2}D_n\,,\quad D_n=p\{.,.\}(h\{.,.\})^{n-2}\iota:\,H(C^\infty(\E))^{\wedge n}\longrightarrow H(C^\infty(\E))\,
\ee
and a new contraction
\bb{BVcontraction5}
\hspace{2.5cm}
\xymatrix@1{
(\bigwedge_\bbR(H(C^\infty(\E)))[1],D)\ar^{\tilde\iota\quad\quad\quad\quad\quad\quad}[r] & (\bigwedge_\bbR(C^\infty(\E))[1],\delta_0+\{.,.\})\quad\quad\quad\quad\quad\quad\quad\quad\quad\quad\ar@<2pt>@{->>}^{\tilde p\quad\quad\quad\quad\quad\quad}[l]\ar@(dr,ur) &\hspace{-7cm}\tilde h\,.
}
\ee
%Notice that $D_2^2=0$ and thus $D_2$ equips $H(C^\infty(\E))$ with the structure of a Gerstenhaber algebra.
The new inclusion $\tilde\iota$ reads
\bb{BVPLiota}
\tilde\iota=\Sum{n\geq 1}\tilde\iota_n\,,\quad \tilde\iota_n=(h\{.,.\})^{n-1}\iota\,
\ee
and the fact that it is a chain map is expressed by
\bb{BVOpenIotaChain}
(\delta_0+\{.,.\})\circ\tilde\iota=\tilde\iota\circ D\,.
\ee
We will need the formulae
\begin{multline}\label{BVpreKontsevich}
\tilde\iota_k(a^{\wedge m})=
\frac{1}{m-k+1}\sum\limits_{i=1}^k\begin{pmatrix}m\\i\end{pmatrix}\tilde\iota_{k+1-i}(a^{\wedge(m-i)})\wedge\tilde\iota_i(a^{\wedge i})\\
=\Sum{i_1+\dots+ i_{k'}=m}\frac{1}{(k')!}\begin{pmatrix}m\\i_1\end{pmatrix}\begin{pmatrix}m-i_1\\i_2\end{pmatrix}\dots\begin{pmatrix}i_{k'}\\i_{k'}\end{pmatrix}\tilde\iota_{i_1}(a^{\wedge i_1})\wedge\dots\wedge\tilde\iota_{i_{k'}}(a^{\wedge i_{k'}})\,,\quad(k'=m-k+1)\,,
\end{multline}
for $k\leq m$, which hold for any $a\in C^\infty(\E)$ of {\em even} ghost degree and are proven by induction.
The new projection $\tilde p$ reads
\bb{BVPLp}
\tilde p=\Sum{n\geq 1}\tilde p_n\,\quad \tilde p_n=p(\{.,.\}h)^{n-1}\,
\ee
and the fact that it is a chain map is expressed by
\bb{BVOpenpChain}
\tilde p\circ(\delta_0+\{.,.\})=D\circ\tilde p\,.
\ee

Eqs (\ref{BVopenS1}) can be expressed as
\bb{BVopenr}
(\delta_0+\{.,.\}) S_1\wedge S_1=\delta_0\text{-exact}\,.
\ee
By applying $\tilde p$ on both sides of (\ref{BVopenr}) and using (\ref{BVOpenpChain}), we get that
$$
\tilde p(\delta_0+\{.,.\})( S_1\wedge S_1)=D\tilde p( S_1\wedge S_1)=D_2\tilde p_1(S_1\wedge S_1)=D_2(p S_1)^{\wedge 2}=0\,.
$$
We will show below that the condition
\bb{BVopencond}
D_k(pS_1)^{\wedge k}=0
\ee
is satisfied, for all $k\geq 2$, if a certain cohomology class vanishes.
Then we have that
\bb{BVDexp}
D\left(\Sum{n\geq 1}\frac{1}{n!}(pS_1)^{\wedge n}\right)=0\,.
\ee
Furthermore, we deduce from eqs (\ref{BVpreKontsevich}) that
\bb{BVKontsevich}
\tilde\iota\left(\Sum{n\geq 1}\frac{1}{n!}(pS_1)^{\wedge n}\right)=\Sum{n\geq 1}\frac{1}{n!}\left(\Sum{m\geq 1}\frac{1}{m!}\tilde\iota_m((pS_1)^{\wedge m})\right)^{\wedge n}\,.
\ee
If we choose the contraction (\ref{BVcontraction}) in such a way that 
\bb{BVopencond0}
h(S_1)=0\,,
\ee
which, due to (\ref{BVOpenHomotopy}), implies that
\bb{BVopencond0'}
\iota p(S_1)=S_1\,,
\ee
we may set
\bb{BVSm}
S_m=\frac{1}{m!}\tilde\iota_m(pS_1)^{\wedge m}\quad (m\geq 1)\,.
\ee
From eqs (\ref{BVOpenIotaChain}), (\ref{BVDexp}) and (\ref{BVKontsevich}) we obtain
\bb{BVpreCME}
(\delta_0+\{.,.\})\left(\Sum{n\geq 1}\frac{1}{n!}\left(\Sum{m\geq 1}S_m\right)^{\wedge n}\right)=0\,,
\ee
which implies that $S:=\Sum{m\geq 0}S_m$ is a solution of the $CME$, written in the form (\ref{BVopenMC}):
$$
\delta_0\left(\Sum{m\geq 1}S_m\right)+\frac{1}{2}\left\{\Sum{m\geq 1}S_m,\Sum{m\geq 1}S_m\right\}=0\,.
$$

In the remainder of this subsection we show that contraction (\ref{BVcontraction}) can be chosen such that condition (\ref{BVopencond}) is satisfied, for all $k\geq 2$, if the first cohomology class of the $BRST$ operator, defined on the shell, vanishes.
First, use eqs (\ref{BVpreKontsevich}) in order to express the l.h.s. of (\ref{BVopencond}) as
\bb{BVopencond2}
D_k(pS_1)^{\wedge k}=p\{.,.\}\tilde\iota_{k-1}(pS_1)^{\wedge k}=\frac{k!}{2}p\Sum{i_1+i_2=k}\{S_{i_1},S_{i_2}\}\,.
\ee
If we define
\bb{BVTk}
T_k:=\frac{k!}{2}\Sum{i_1+i_2=k}\{S_{i_1},S_{i_2}\}=\{.,.\}\tilde\iota_{k-1}(pS_1)^{\wedge k} \,,
\ee
conditions (\ref{BVopencond}) thus read
\bb{BVopencond3}
pT_k=0\,,\quad k\geq 2\,.
\ee
Furthermore, we deduce from (\ref{BVSm}) and (\ref{BVTk}) that
\bb{BVopenST}
S_k=\frac{1}{k!}h(T_k)\,,\quad k\geq 2\,.
\ee
%and, since $h^2=0$, we have that
%\bb{BVopenhS}
%h(S_k)=0\,,\quad k\geq 1\,.
%\ee
Now assume that conditions (\ref{BVopencond3}) are satisfied up to order $n$:
\bb{BVOpenCondInd}
p(T_k)=0\,,\quad2\leq k\leq n\,.
\ee
We show that this implies that
\bb{BVdelta0T}
\delta_0T_k=0\,,\quad 2\leq k\leq n+1\,.
\ee
Using eqs (\ref{BVOpenHomotopy}), (\ref{BVPLiota}), (\ref{BVTk}) and the Jacobi identity $\{.,.\}^2=0$, we see that $\delta_0T_k=0$ if the following conditions are met
$$
p\{.,.\}\tilde\iota_{i}(pS_1)^{\wedge k}=0\,,\quad i=1,\dots, k-2\,.
$$
But these conditions follow from the conditions (\ref{BVOpenCondInd}), as an application of eq. (\ref{BVpreKontsevich}) shows:
\begin{align*}
p\{.,.\}\tilde\iota_{i}(pS_1)^{\wedge k}
&=p\{.,.\}\frac{k!}{(k')!}\Sum{i_1+\dots+i_{k'}=k}S_{i_1}\wedge\dots\wedge S_{i_{k'}}\quad (k'=k-i+1)\\
&=\frac{(-1)^{k'+1}k!}{(k'-2)!}\frac{1}{2}\Sum{i_1+\dots+i_{k'}=k}p\{S_{i_1},S_{i_2}\}\wedge pS_{i_3}\wedge\dots\wedge pS_{i_{k'}}\\
&=\frac{(-1)^{k'+1}k!}{(k'-2)!}\Sum{I+i_3+\dots+i_{k'}=k}\frac{1}{I!}pT_I\wedge pS_{i_3}\wedge\dots\wedge pS_{i_{k'}}=0\,.
\end{align*}
Notice that we are free to define $h$ and, because of (\ref{BVopenST}), $S_k$, for $k\geq 2$, as long as eqs (\ref{BVOpenHomotopy}) and (\ref{BVopensidecond}) are satisfied.
Using eqs (\ref{BVopenST}), (\ref{BVOpenCondInd}) and (\ref{BVdelta0T}) we see that eqs (\ref{BVOpenHomotopy}) and (\ref{BVopensidecond}) lead to the following conditions for $S_k$:
\bb{BVOpenDelta0T}
\delta_0S_k=-\frac{1}{k!}T_k\,,\quad 2\leq k\leq n\,,
\ee
and 
\bb{BVopenpSk}
pS_k=0\,,\quad 2\leq k\leq n\,.
\ee
Forgetting condition (\ref{BVopenpSk}) for a moment, we see that $S_n$ is only determined by $S_k$, for $1\leq k< n$, up to a $\delta_0$-closed term.
We need to show that condition (\ref{BVOpenCondInd}) is satisfied for $k=n+1$. Due to (\ref{BVdelta0T}) this is the same as to show that $T_{n+1}$ is $\delta_0$-exact.
%Applying (\ref{BVOpenHomotopy}) to $S_k$ and using (\ref{BVopenST}), (\ref{BVopenhS}) and (\ref{BVOpenDelta0T}), we find that 
%\bb{BVopenpS}
%pS_k=0\,,\quad k\leq n\,.
%\ee
We show that the freedom of adding a $\delta_0$-closed term to $S_{n}$ is enough to render $T_{n+1}$ $\delta_0$-exact, if the on-shell $BRST$-cohomology at ghost degree one vanishes.
Thus, denote the $BRST$ co-boundary operator on $H(C^\infty(\E))$ by 
\bb{BVopenBRST}
\delta_{BRST}:=p\{S_1,.\}\iota\,,
\ee
and check that it squares to zero.
We show that $pT_{n+1}$ is $\delta_{BRST}$-closed if conditions (\ref{BVOpenCondInd}) hold:
\begin{align*}
\delta_{BRST}pT_{n+1}
&=(n+1)!p\Sum{i_1+i_2=n+1}\{\{S_1,S_{i_1}\},S_{i_2}\}\\
&=(n+1)!p\Bigg(\Sum{\overset{i_1+i_2=n+1}{i_1,i_2\geq 2}}\{\{S_1,S_{i_1}\},S_{i_2}\}+\{\{S_1,S_1\},S_n\}+\{\{S_1,S_n\},S_1\}\Bigg)\\
&=-(n+1)!p\Bigg(\Sum{\overset{i_1+i_2=n+1}{i_1,i_2\geq 2}}\Bigg(\{\delta_0S_{i_1+1},S_{i_2}\}+\frac{1}{2}\Sum{\overset{j_1+j_2=i_1+1}{j_1,j_2\geq 2}}\{\{S_{j_1},S_{j_2}\},S_{i_2}\}\Bigg)\\
&\hspace{9.5cm}-\{S_n,\delta_0S_2\}\Bigg)\\
&=-(n+1)!p\Bigg(\delta_0\Sum{\overset{i_1+i_2=n+2}{i_1,i_2\geq 2}}\{S_{i_1},S_{i_2}\}+\frac{1}{2}\Sum{\overset{j_1+j_2+i_2=n+2}{j_1,j_2,i_2\geq 2}}\{\{S_{j_1},S_{j_2}\},S_{i_2}\}\Bigg)=0
\end{align*}
where we used (\ref{BVOpenHomotopy}), (\ref{BVTk}), (\ref{BVdelta0T}) and $p\{S_1,.\}\delta_0=0$ in the first equation, (\ref{BVTk}) and (\ref{BVOpenDelta0T}) in the third and the Jacobi identity as well as $p\circ\delta_0=0$ in the last equation.
If the $\delta_{BRST}$-cohomology vanishes at ghost degree one we can thus write
$$
T_{n+1}=\{S_1,G\}+\delta_0-\text{exact}\,,
$$
for some $\delta_0$-closed term $G$.
And, since
$$
T_{n+1}=k!\{S_1,S_{n}\}+\text{terms independent of $S_{n}$}\,,
$$
we may use the freedom of adding a $\delta_0$-closed term to $S_{n}$ to render $T_{n+1}$ $\delta_0$-exact.
Finally, condition (\ref{BVopenpSk}) can always be satisfied without spoiling the $\delta_0$-exactness of $T_{n+1}$ by replacing $S_n$ by $S_n-\iota\circ pS_n$.
Hence conditions (\ref{BVopencond}) are satisfied for all $k$, which was to be shown.

For a more general view on this subject we refer to \cite{SH}.

%%%%%%%%%%%%%%%%%%%%%%%%%%%%%%%%%%%%%%%%%%%%%%%%%%
\subsection{The $BV$-Laplacian and localization}
The last three subsections were devoted to finding a solution of the $CME$ $\{S,S\}=0$.
In this subsection, we assume that there are no symmetries, i.e. $S=S_0$, and perturb the differential $\delta_0:=\{S_0,.\}$ by the $BV$-Laplacian so as to obtain the $BV$-differential
$$
\delta_{BV}=\delta_0-i\hbar\Delta_\Omega\,;
$$
see (\ref{BVdeltaBV}).
We restrict our attention to $\E_0:=\Pi T^*M$ and the differential $\delta_0:=\{S_0,.\}$ and choose a contraction
\bb{BVcontraction3}
\xymatrix@1{
(H(\delta_0),0)\ar^{\iota\quad\quad}[r] & (C^\infty(\E_0),\delta_0)\quad\quad\quad\quad\ar@<2pt>@{->>}^{p\quad\quad}[l]\ar@(dr,ur) &\hspace{-3cm}h\,.
}
\ee
Under the regularity condition (\ref{BVRegularityCond}) we have seen in (\ref{BVdelta0coho}) that $H(\delta_0)=C^\infty(\Pi T^*\Sigma)$.
Then the projection $p$ is simply the restriction to $\Sigma$ and we can choose $\iota$ to extend a super-function on $\Pi T^*\Sigma$ to a smooth super-function supported on a $\epsilon$-neighbourhood of $\Sigma$ in $M$. We shall keep track of the $\epsilon$-dependence of $\iota$ and write $\iota^\epsilon$ instead. 
We perturb $\delta_0$ by $-i\hbar\Delta_\Omega$ - for some pulled-back measure $\Omega$ on $M$ - 
and apply the Perturbation Lemma in order to get a pertubed contraction with inclusion, $\tilde\iota$, projection, $\tilde p$, and homotopy operator, $\tilde h$, given by formal power series in $\hbar$.
Applying equation {\it 2.} in Definition \ref{BVDefcontraction} (see Sect. \ref{SemidensityBVOp}) of the perturbed contraction to the constant function $1$ on $M$ gives
$$
\tilde\iota^\epsilon\tilde p(1)-1=\frac{i}{\hbar}\delta_{BV}\tilde h(1)\,.
$$
Since
$$
\Delta\left[\tilde h(1)e^{iS_0/\hbar}\Omega\right]=\frac{i}{\hbar}\delta_{BV}(\tilde h(1))\left[e^{iS_0/\hbar}\Omega\right]\,
$$
and integrals of $\Delta$-exact semidensities vanish, we get that
\bb{BVLocInt}
\Int{M}e^{iS_0/\hbar}[\Omega]=\Int{M}\tilde\iota^\epsilon\tilde p (1)e^{iS_0/\hbar}[\Omega].
\ee
Note, that $\tilde\iota^\epsilon\tilde p (1)$ is localized in an $\epsilon$-neighbourhood of $\Sigma$, since $\Delta_\Omega(\iota^\epsilon\tilde p (1))=0$ and hence $\tilde\iota^\epsilon\tilde{p} (1)$ can be replaced by $\iota^\epsilon\tilde{p} (1)$.
In the limit $\epsilon\rightarrow 0$ the integrand on the r.h.s. of eq. (\ref{BVLocInt}) tends to an effective measure on $\Sigma$ in the form of a formal power series in $\hbar$. In general, the latter does not converge. As an example where it terminates, we rederive the Duistermaat-Heckman localization formula.
Before we do so, we describe an equivalent way of constructing an effective measure on $\Sigma$.
We remove the critical points from $M$: $\tilde M:=M\backslash\Sigma$, $\tilde\E_0:=\Pi T^*\tilde M$ and choose a contraction
\bb{BVcontraction4}
\xymatrix{
0\ar^{\iota\quad\quad\quad\quad}[r] & *{(C^\infty(\tilde\E_0),\delta_0)\quad\quad\quad\quad}\ar@<2pt>@{->>}^{p\quad\quad\quad\quad}[l]\ar@(dr,ur) &\hspace{-3cm}h\,.
}
\ee
The operator $h$ is multiplication with a super-function, denoted $h$ as well, of anti-ghost-degree $-1$ that satisfies
$$
\delta_0(h)=1\,.
$$
Such a function exists on $\tilde \E_0$.

Now, we perturb the differential $\delta_0$ by $-i\hbar\Delta_\Omega$.
The Perturbation Lemma (or a direct computation) then tells us that
$$
id=\tilde h\delta_{BV}+\delta_{BV}\tilde h\,,
$$
where $\tilde h$ is given by the formal power series 
\bb{DHPL}
\tilde h:=\Sum{n\geq 0}(-i\hbar)^{n}(h\Delta_\Omega)^nh\,.
\ee
Assuming that $\Sigma$ has zero measure w.r.t. $\Omega$, we obtain a formula of the kind (\ref{BVLocInt})
\bb{DHIntexact}
\Int{M} e^{iS_0/\hbar}[\Omega]=
\Int{\tilde M} e^{iS_0/\hbar}[\Omega]=
\Int{\tilde M}\delta_{BV}(\tilde h) e^{iS_0/\hbar}[\Omega]=
-i\hbar\lim\limits_{\epsilon\rightarrow 0}\Int{\Sigma_\epsilon}\tilde h e^{iS_0/\hbar}[\Omega]\,,
\ee
where $\Sigma_\epsilon$ denotes the boundary of a small tubular neighbourhood of $\Sigma$ in $M$.
Eq. (\ref{DHIntexact}) yields an effective measure on $\Sigma$ in the limit $\epsilon\rightarrow 0$ as a formal power series in $\hbar$.

As an example, we study Duistermaat-Heckman localization \cite{DH}.
We consider a compact $2m$-dimensional symplectic manifold $(M,\omega)$ without boundary and a Hamiltonian $H\in C^\infty(M;\bbR)$ with the property that its Hamiltonian vector field integrates to a $U(1)$-action on $M$ whose fixed points (the critical points of $H$) are isolated. Then the Duistermaat-Heckman localization formula states that
\bb{DH}
\Int{M}e^{iH/\hbar}\frac{\omega^m}{m!}=(-2\pi i\hbar)^m\Sum{dH=0}\frac{e^{i H/\hbar}\sqrt{\det\omega}}{\sqrt{\det(H_{,ij})}}\,.
\ee
As discovered by Atiyah and Bott \cite{AB}, the fact that (\ref{DH}) localizes at critical points can be seen with the help of {\em equivariant cohomology}. We give a description following Blau and Thompson \cite{Blau}.
Note, first, that the integrand in (\ref{DH}) coincides with the top degree part of the inhomogeneous form
\bb{DHequivform}
e^{iH/\hbar+\omega}\,,
\ee
which is closed w.r.t. to the equivariant differential
\bb{DHdX}
d_{\vc X}:=d-i_{\vc X}\,,\quad i_{\vc X}\omega=dH\,.
\ee
We show that, on $\tilde M$, the form (\ref{DHequivform}) is also $d_{\vc X}$-exact. For this purpose, we choose a metric, $g$, on $M$ such that
\bb{DHequivexact}
e^{iH/\hbar+\omega}=d_{\vc X}\left(g(\vc X)(d_{\vc X}(g(\vc X)))^{-1}e^{iH/\hbar+\omega}\right)\,.
\ee
The one form $d_{\vc X}(g(\vc X))$ is invertible, since its zero-degree part, $-g(\vc X,\vc X)$, is nowhere vanishing on $\tilde M$.
Note that $d_{\vc X}$ does not square to zero, but
\bb{DHdX2}
d_{\vc X}^2=-L_{\vc X}=-di_{\vc X}-i_{\vc X}d\,.
\ee
Hence, in order for (\ref{DHequivexact}) to be true, we need to choose $g$ to be invariant under the $U(1)$-action, $L_{\vc X}g=0$, which can always be achieved by averaging.
Taking the top-degree part on both sides of eq. (\ref{DHequivexact}) we see that the measure in (\ref{DH}) becomes $d$-exact, if the critical points are removed.

Knowing that the integral (\ref{DH}) localizes, we calculate the contribution from a critical point $p\in\Sigma$, i.e., $dH(p)=0$, to the integral (\ref{DH}), using formulae (\ref{DHPL}) and (\ref{DHIntexact}).
We start by choosing radial local coordinates in a neighbourhood of $p$, such that
$$
H(r)=H(p)+\frac{1}{2}r^2.
$$
We may set
$$
h:=\frac{r^\dag}{r}\,,
$$
since $\delta_0(h)=1$.
The Liouville measure, $\Omega=\omega^m/m!$, expressed in radial coordinates, then reads
$$
\Omega=\frac{\sqrt{\det\omega}\,r^{2m-1}}{\sqrt{\det (H_{,ij})}}\Omega_{S^{2m-1}}\wedge dr\,,
$$
where $\Omega_{S^{2m-1}}$ denotes the standard volume element on the unit sphere.
The term of order $\hbar^{n-1}$ in (\ref{DHPL}) is calculated to be
$$
(-i\hbar)^{n-1}(2m-2)(2m-4)\dots (2m-2(n-1))r^{-(2n-1)}r^\dag\,.
$$
Its contribution to the integral (\ref{DHIntexact}) is thus given by
$$
\lim\limits_{\epsilon\rightarrow 0}\Int{S^{2m-1}_\epsilon}(-i\hbar)^n(2m-2)(2m-4)\dots (2m-2(n-1))\epsilon^{2(m-n)}e^{iH/\hbar}\frac{\sqrt{\det\omega}\operatorname{Vol}(S^{2m-1})}{\sqrt{\det( H_{ij})}}\,.
$$
The area of the standard sphere around the critical point measured w.r.t. $\Omega_{S^{2m-1}}$, $\operatorname{Vol}(S^{2m-1})$, is given by the formula
$$
\operatorname{Vol}(S^{2m-1})=\frac{(2\pi)^m}{2\cdot4\cdot\dots\cdot (2m-2)}\,.
$$
Hence, the terms of order $n<m$ yield contributions that vanish in the limit $\epsilon\rightarrow 0$. The term of order $n=m$ yields
$$
(-2\pi i\hbar)^m\frac{e^{i H/\hbar}\sqrt{\det\omega}}{\sqrt{\det(H_{,ij})}}\,,
$$
and the terms of order $n>m$ vanish.

\subsection{Anomalies}\label{Anomalies}

If a symmetry $P$ is present, we aim at finding a solution of the $QME$
$$
\frac{1}{2}\{S,S\}-i\hbar \Delta_{\Omega}S=0\,
$$
in such a way that $P$ is encoded in the corresponding co-boundary operator
$$
\delta_{BV}=\{S,.\}-i\hbar\Delta_\Omega\,;
$$
see eq. (\ref{Introcond1}).
Since the expectation values, $\langle f\rangle$, defined in (\ref{BVmeasure}) are then defined on cohomology classes of $\delta_{BV}$, they are independent of the particular gauge-fixing, provided $f$ is gauge-invariant. This is to say, that the classical symmetry $P$ survives quantization if a solution of the $QME$ can be found.
Classical symmetries that do not survive quantization are called {\em anomalous} and can be seen as obstructions to solving the $QME$.
In Subsection \ref{BVopen} we have found obstructions to solving the $CME$ in the first cohomology classes of the $BRST$-operator on the shell.
In this subsection, we show that the same cohomology classes contain obstructions to solving the $QME$.

Set $\delta:=\{\tilde S,.\}$ for a solution of the $CME$, $\{\tilde S,\tilde S\}=0$, and assume that a measure $\Omega=\pi^*\hat\Omega$ is not invariant w.r.t. the symmetry encoded in $\tilde S$, i.e., $\Delta_\Omega(\tilde S)\neq 0$.
Instead of changing $\hat\Omega$ or the projection $\pi$ from $\E$ onto its body, we may attempt to add terms of higher order in $\hbar$ to $\tilde S$, $S=\tilde S + \hbar Q$, for $Q\in C^\infty(\E)[[\hbar]]$, so as to obtain a solution of the $QME$:
Notice that $\Delta_\Omega \tilde S$ is $\{\tilde S,.\}$-closed as a consequence of eq. (\ref{BVLemmaDelta2}).
{\em If the $\{\tilde S,.\}$-cohomology at ghost degree one vanishes}, there is a $T_1\in C^\infty(\E)$, such that
\bb{BVanomalycond}
\Delta_\Omega \tilde S=\{\tilde S,T_1\}.
\ee
We get
$$
\frac{1}{2}\{\tilde S+i\hbar T_1,\tilde S+i\hbar T_1\}-i\hbar\Delta(\tilde S+i\hbar T_1)=-\hbar^2\left(\frac{1}{2}\{T_1,T_1\}- \Delta T_1\right)\,.
$$
With
\bb{BVQME2}
\{\tilde S,\frac{1}{2}\{T_1,T_1\}-\Delta T_1\}=0
\ee
and the above assumption on the cohomology, we get the next order $\hbar$-correction, and so on.
In this way, we find a solution of the $QME$ in the form of a formal power series in $\hbar$.

Note that the $\{\tilde S,.\}$-differential is a perturbation of $\delta_0=\{S_0,.\}$ and induces a differential on $H(\delta_0)$ whose first term is the $BRST$-differential (\ref{BVopenBRST}).
Hence, the obstructions to solving the $QME$ are in the same cohomology classes as the obstructions to solving the $CME$ in the case of an open symmetry.


\begin{thebibliography}{99}

\bibitem{AB} M.F. Atiyah, R. Bott, {\em The moment map and equivariant cohomology}, Topology {\bf 23} (1984), 1-28.

\bibitem{BRS} C. Becchi, A. Rouet and R. Stora, {\em Renormalization Of Gauge Theories}, Annals
Phys. {\bf 98} (1976) 287.

\bibitem{BV} 
I. Batalin, G. Vilkovisky, \emph{Gauge algebra and quantization}, Phys. Lett. {\bf B102} (1981), 27; \emph{ Quantization of gauge theories with linearly dependent generators}, Phys. Rev. {\bf D28} (1983), 2567; \emph{Closure of the gauge algebra, generalized Lie equations and Feynman rules}, Nucl. Phys. {\bf B 234} (1984), 106.

\bibitem{Ba79}
M. Batchelor, \emph{The Structure of Supermanifolds}, Trans.Am.Math.Soc. {\bf 253} (1979), 329--338.

%\bibitem{Ba80}
%M. Batchelor, \emph{Two Approaches to Supermanifolds}, Trans.Am.Math.Soc. {\bf 258 (1)} (1980), 257--270.

%\bibitem{Berezin}
%F.A. Berezin, \emph{Introduction to Superanalysis}, D. Reidel Publishing Company, Dordrecht, Holland (1987).

\bibitem{Blau} M. Blau, G. Thompson, {\em Localization and Diagonalization}, J.Math.Phys. {\bf 36} (1995) 2192-2236.

\bibitem{RB} R. Brown, {\em The twisted Eilenberg-Zilber theorem}, Celebrazioni Archimedee del secolo XX, Simposio di topologia (1967) 34-37.

%\bibitem{BrydgesImbrie} 
%David C. Brydges, John Z. Imbrie, \emph{Branched Polymers and Dimensional Reduction}, Ann. of Math. {\bf 158} (2003), 1019-1039  (arXiv:math-ph/0107005).

\bibitem{Costello} 
K. J. Costello, \emph{Renormalisation and the Batalin-Vilkovisky formalism}, arXiv: hep-th/0706.1533.

\bibitem{DH}
J.J. Duistermaat, G.H. Heckman, {\em On the variation in the cohomology in the symplectic form of the reduced phase space}, Inv. Math. {\bf 69} (1982), 259-268, {\bf 72} (1983), 153.

\bibitem{Gomis} J. Gomis, J. Paris, S. Samuel, {\em Antibracket, antifields, and gauge-theory quantization},  Phys. Rept. {\bf 259} (1995), 1-145.

\bibitem{GLS}
V.K.A.M. Gugenheim, L.A. Lambe, J.D. Stasheff, {\em Perturbation Theory in Differential Homological Algebra II}, Illinois J. Math. {\bf 35} (1991), 357-373.

%\bibitem{deWitt}
%B. de Witt, \emph{Supermanifolds}, Cambridge Monographs on Mathematical Physics (1985).

%\bibitem{Henneaux-Teitelboim}
%M. Henneaux and C. Teitelboim, \emph{Quantization of Gauge Systems}, Princeton University Press (1992).

\bibitem{SH} 
J. Huebschmann, J. Stasheff, \emph{Formal solution of the master equation via HPT and deformation theory},
Forum Mathematicum {\bf 14} (2002), 847-868, arXiv:math/9906036. 

\bibitem{Khudaverdian}
O. M. Khudaverdian, {\em Semidensities on Odd Supermanifolds}, Comm. Math. Phys. {\bf 247}, (2004), 353--390; preprint: DG/0012256.

\bibitem{Kontsevich} 
M. Kontsevich, \emph{Deformation quantization of Poisson manifolds, I}, Lett.Math.Phys. {\bf 66} (2003) 157-216, arXiv:q-alg/9709040.

%\bibitem{Kost77}
%B. Kostant, \emph{Graded Manifolds, Graded Lie Theory and Prequantization}, Differential Geometrical Methods in Mathematical Physics, Bonn 1975, Springer Lecture Notes in Mathematics {\bf{570}} (1977), 177--306.


\bibitem{LambeStasheff}
L. Lambe, J.D. Stasheff, \emph{Applications of perturbation theory to iterated fibrations}, Manuscripta Math. {\bf 58}, (1987), 363--376.

%\bibitem{Leites}
%D.A. Leites, \emph{Introduction to the Theory of Supermanifolds}, Russian Math. Surveys {\bf 35:1}, (1980), 1--64.

%\bibitem{Manin}
%Y.I. Manin, \emph{Gauge Field Theory and Complex Geometry}, 2nd Edition, Springer Verlag (1997).

%\bibitem{Real}
%P. Real, \emph{Homological Perturbation Theory and Asssociativity}, ???

%\bibitem{Schwarz}
%A. Schwarz, \emph{Geometry of Batalin--Vilkovisky Quantization}, CMP {\bf{155}} (1993), 249--260.

\bibitem{Severa}
P. Severa, \emph{On the Origin of the BV Operator on Odd Symplectic Supermanifolds}, Lett. Math. Phys. {\bf 78}, (2006), 55--59; preprint: math.DG/0506331.

%\bibitem[ShnWe89]{ShnWe}
%S. Shnider, R.O. Wells, \emph{Supermanifolds, Super Twistor Spaces and Super Yang--Mills Fields}, Les Presses de L'Universiit{\'e}  de Montr{\'e}al (1989).

\bibitem{WS} W. Shih, {\em Homologie des espaces fibr\'es}, Publ. Math. Inst. Hautes Etudes Sci. {\bf 13} (1962) 93-176.

%\bibitem[Sta197]{Stasheff1}
%J. Stasheff, \emph{Deformation Theory and the Batalin--Vilkovisky Master Equation}, q-alg/9702012 v1 (1997).

%\bibitem[Sta297]{Stasheff2}
%J. Stasheff, \emph{The (Secret?) Homological Algebra of the Batalin--Vilkovisky Approach}, hep-th/9712157 (1997).

\bibitem{T}
I.V. Tyutin, {\em Gauge Invariance In Field Theory And Statistical Physics In
Operator Formalism}, LEBEDEV-75-39.

\bibitem{ZJ} 
J. Zinn-Justin in: \emph{Trends in elementary particle theory}, Lecture Notes in Physics {\bf 37}, eds. H.Rollnik and K. Dietz, Berlin: Springer 1975.

\bibitem{Zuckerman} 
G. J. Zuckerman, \emph{Action principles and global geometry}, in:  Mathematical aspects of string theory, Proceedings, San Diego (1986) 259.

%\bibitem[Swan]{Swan}
%R.G. Swan, \emph{The Theory of Sheaves}, The University of Chicago Press (1964). 

%\bibitem[Var04]{Var04}
%V.S. Varadarajan, \emph{Supersymmetry for Mathematicians: An Introduction}, AMS Courant Lecture Notes {\bf{11}} (2004).

\end{thebibliography}
\end{document}